\documentclass[journal,twoside]{IEEEtran}

\usepackage[T1]{fontenc}
\usepackage[utf8]{inputenc}
\usepackage{amsmath, amssymb, amsthm}
\usepackage{graphicx}
\usepackage{cite}
\usepackage{booktabs}
\usepackage{bm}
\usepackage{xcolor}

\newtheorem{theorem}{Theorem}
\newtheorem{proposition}{Proposition}
\newtheorem{corollary}{Corollary}

\newcommand{\kn}{\kappa_N}
\newcommand{\dn}{\Delta_N}
\newcommand{\bA}{\bm{A}}
\newcommand{\bB}{\bm{B}}
\newcommand{\bR}{\bm{R}}
\newcommand{\bq}{\bm{q}}
\newcommand{\bp}{\bm{p}}
\newcommand{\bP}{\bm{P}}
\newcommand{\bPbar}{\bar{\bm{P}}}
\newcommand{\bSigma}{\bm{\Sigma}}
\newcommand{\ehat}{\hat{\bm{e}}}
\newcommand{\trace}{\operatorname{tr}}

\begin{document}

\title{$\kappa$: A Geometry-Quality Metric Complementary to GDoP for Closed-Form TDoA Multilateration}

\author{Abeer~Nasir~Chaudhry,
        Salman~Liaquat,~\IEEEmembership{Member,~IEEE},
        and Muhammad~Mohsin~Khadim%
\thanks{A.~N.~Chaudhry is with the American University of Sharjah, Sharjah, United Arab Emirates (e-mail: b00085496@aus.edu).}%
\thanks{S.~Liaquat is with the School of Electrical and Electronic Engineering, Universiti Sains Malaysia, Penang, Malaysia and National Aerospace Science and Technology Park, Islamabad, Pakistan (e-mail: salmanliaquat@ieee.org).}%
\thanks{M.~M.~Khadim is with the National University of Sciences and Technology, Islamabad, Pakistan (e-mail: mohsinkhadim59@gmail.com).}%
\thanks{Manuscript prepared \today. Corresponding author: Salman Liaquat.}%
}

\markboth{IEEE Transactions on Aerospace and Electronic Systems}{}

\maketitle

\begin{abstract}
The Geometric Dilution of Precision (GDoP) characterizes the noise sensitivity of a Time-Difference-of-Arrival (TDoA) localization system, but does not capture every way the analytical multilateration solution can become ill-conditioned. We introduce a complementary geometry-quality metric $\kappa$, the leading coefficient of the closed-form TDoA solver's quadratic, and derive its $N$-dimensional generalization through a vectorized formulation. Two closed-form algebraic identities relate $\kappa$ to the Jacobian determinant of the measurement model and to the quadratic's discriminant, establishing that the system exhibits exactly two distinct singularity loci: branch divergence and the Jacobian/branch-merge locus flagged by GDoP. A Cram\'{e}r--Rao-bound-linked closed form for the noise sensitivity $\sigma_\kappa$ under the standard Gaussian ToA model is validated against Monte~Carlo to 2\% median relative error. An empirical atlas over a dimensionless geometry parameter space confirms both identities at machine precision and shows that $\kappa$-bad regions and GDoP-bad regions are non-trivially disjoint in target space, establishing the two metrics as genuinely complementary. A case study on a four-node operational array, with per-sensor time of arrival (ToA) noise estimated empirically from Automatic Dependent Surveillance Broadcast (ADS-B)-paired over-the-air captures, shows that the theory-predicted threshold and a Monte-Carlo-measured operational threshold agree on the per-subsystem ordering at the deployment noise level. Their ratio is approximately constant across the three two-dimensional subsystems, serving as a deployment-specific calibration constant between the algebraic $\kappa$-noise floor and the downstream operational threshold, analogous in spirit to the standard relation linking GDoP to the circular error probable.
\end{abstract}

\begin{IEEEkeywords}
Time-difference of arrival, multilateration, geometric dilution of precision, ill-conditioning, analytical solution, Jacobian, CRLB.
\end{IEEEkeywords}

\section{Introduction}
\IEEEPARstart{T}{ime}-Difference-of-Arrival (TDoA) multilateration is a foundational technique in modern passive surveillance, supporting applications that span civil air-traffic management via Automatic Dependent Surveillance Broadcast (ADS-B) multilateration networks \cite{schaeferjonas2025, sonnleitner2025, osypiuksurma2025}, wide-area Global Navigation Satellite System (GNSS) jamming and spoofing detection \cite{gattis2026}, maritime moving-target tracking with multistatic passive radar \cite{nassosanti2024}, and electronic-warfare emitter geolocation. The architectural appeal of a TDoA system is parsimony: a network of time-synchronized receivers, no transmitter, and no specialized angle-of-arrival hardware. Its principal challenge is geometric: the localization accuracy, the uniqueness of the position solution, and the numerical stability of the underlying solver all depend on where the receivers sit relative to the target.

Two families of solvers dominate the TDoA literature. Iterative methods (e.g., Taylor-series least-squares \cite{foy1976}) linearize the nonlinear TDoA equations around an initial guess and refine through successive updates; they sidestep the algebraic degeneracies analyzed here at the cost of initial-guess dependence and possible divergence in near-collinear configurations. Closed-form analytical methods avoid iteration entirely: by squaring and combining the TDoA equations, eliminating cross-terms via a coordinate transformation, and reducing the system to a single scalar polynomial, they obtain candidate solutions in finite steps \cite{bancroft1985, smithabel1987, smithabel1987tassp, schauRobinson1987, friedlander1987, fang1990, chanho1994, huangBenestyElkoMersereau2001, sunhowan2019, hubacek2022, linwang2024, xingsun2024, inamdar2025}. The closed-form route is preferred in operational systems for its determinism, computational efficiency, and insensitivity to initial-guess pathology. It pays a price, however: the algebraic manipulations introduce degenerate configurations specific to the closed-form route that the iterative route does not encounter, because it never forms the offending polynomial in the first place.

The geometric quality of a TDoA receiver array has been studied extensively. The Cram\'{e}r--Rao bound (CRLB) framework \cite{torrieri1984} provides a lower bound on the position-estimation variance achievable by any unbiased estimator, and the Geometric Dilution of Precision (GDoP) is its dimensionless companion \cite{sunhowan2019, pinepine2021}. Recent work has extended these tools to moving-source TDoA--FDoA configurations \cite{yangzheng2024}, to sensor-placement design accounting for sensor-location uncertainty \cite{zhanghan2025}, to four-receiver three-dimensional TDoA with explicit treatment of the convergence-sphere parameter as a scalar geometry-quality indicator \cite{diezgonzalez2019}, to closed-form three-dimensional extensions with hybrid measurement types \cite{linwang2024, xingsun2024}, to the well-posedness of source-localization closed forms via semi-definite-programming and generalized-trust-region reductions \cite{beckstoicali2008}, and to fault-tolerant hyperbolic localization with sensor consensus \cite{simonzachar2024}. The unifying assumption across this literature is that the relevant geometric pathology is captured by the Jacobian of the measurement map: GDoP diverges precisely when this Jacobian becomes rank-deficient.

This assumption is sufficient if the only failure mode of a TDoA solver is noise amplification near a rank-deficient Jacobian. For closed-form algebraic solvers, it is not. The coordinate-transformation and squaring steps that linearize the TDoA equations introduce an algebraic-singularity structure that does not coincide with Jacobian rank-deficiency: the resulting polynomial in the unknown range can degenerate in a manner that has no counterpart in the Jacobian-based picture. Hub\'{a}{\v c}ek \emph{et al.}\ \cite{hubacek2022}, working with the two-dimensional three-sensor case, noted this degeneracy and denoted its indicator $M = B^2 + D^2 - 1$, observing that the analytical solver exhibits root-divergence on the locus $M = 0$. They further reported empirically that the localization error grows sharply in the same neighborhood, writing of these high-error regions that ``there are just the areas where the $M = 0$ condition is satisfied'' \cite[p.~231]{hubacek2022}. This is the closest the prior literature comes to using $M$ as a quality indicator: the empirical coincidence between the $M = 0$ locus and the high-error pattern is identified, but $M$ is treated as a solvability condition with a degenerate locus, not as a metric whose value is propagated under measurement noise. Algebraic-geometric treatments of the TDoA map have characterized the bifurcation locus where the analytical solver degenerates, together with the related Jacobian-discriminant locus, for the planar three-receiver case \cite{compagnoni2014inverse, compagnoni2014arxiv}, with partial extensions to the three-dimensional range-map setting \cite{compagnoni2017jns}. A subsequent statistical treatment of the same singular set under range-difference noise appears in \cite{compagnoni2016arxiv}, the closest prior framing to the noise-propagation approach we adopt for $\sigma_\kappa$ in Section~\ref{sec:theorem3}. These results, however, treat the degeneracy as a mathematical object to be classified or as a statistical model of the source, not as a quantity to be propagated and reported alongside GDoP at runtime. To the best of our literature search, no prior publication names and operationally packages the leading-coefficient indicator as a general-$N$ geometry-quality metric in solver-state coordinates, derives a closed-form noise sensitivity for it, or quantifies its empirical complementarity with GDoP across the dimensionless geometry space. The need for this characterization is operational: a four-node array can simultaneously support a three-dimensional solver and several three-sensor two-dimensional subsystem solvers, each carrying its own closed-form polynomial whose algebraic-degeneracy locus must be flagged independently of the GDoP layer.

This paper closes that gap. Treating the leading-coefficient indicator as a first-class geometry-quality metric in its own right, we generalize it to $N$-dimensional configurations through a vectorized derivation, link it to the Jacobian determinant via two new closed-form algebraic identities, and demonstrate, both theoretically and through an extensive empirical atlas, that it is genuinely complementary to GDoP. We denote the generalized coefficient $\kn$ and refer to its associated diagnostic as the \emph{$\kappa$ layer} of multilateration geometry quality, distinct from the noise-sensitivity GDoP layer.\footnote{Poisel \cite[\S5.2.7]{poisel2012} uses the symbol $\bar{\kappa}$ for the TDoA-Jacobian matrix that defines GDoP, $g = \sqrt{\trace((\bar{\kappa}^{\top}\bar{\kappa})^{-1})}$. Our $\kn$ is unrelated: a scalar leading coefficient of the $K$-quadratic, not a Jacobian matrix.} Used jointly, the two metrics yield a $2 \times 2$ classification of target positions that refines the GDoP-only picture and resolves operational ambiguities about per-subsystem solver health in multi-receiver arrays.

This paper makes three main contributions:
\begin{enumerate}
\item A unified $N$-dimensional vectorized derivation of $\kn = \|\bB\|^2 - 1$ as the leading coefficient of the analytical TDoA $K$-quadratic for $N{+}1$ sensors in $\mathbb{R}^N$ (the 2D case recovers Hub\'{a}{\v c}ek's $M$; the 3D case the four-node operational solver), together with two closed-form algebraic identities (T2.1, T2.2) factoring $\det \bm{J}'$ and the $K$-quadratic discriminant $\dn$ in the solver-state quantities $(\kn, K, \|\bA\|^2)$ for general~$N$. The singularity loci themselves were identified in the planar three-receiver case by Compagnoni and Notari \cite{compagnoni2014inverse, compagnoni2014arxiv}; our contribution is the closed-form factorization in solver-state coordinates that is directly computable from TDoA data and extends to general~$N$.
\item A CRLB-linked closed-form expression for $\sigma_\kappa(\bq, \sigma_t, \text{geometry})$ from a chain-rule derivation under the standard Gaussian ToA noise model, validated against Monte~Carlo to within $2\%$ median relative error.
\item A dimensionless empirical atlas in $(\beta, \gamma, \tilde{r}, \theta)$ confirming the identities at machine precision and demonstrating that $\kn$-bad and GDoP-bad regions are non-trivially disjoint in target space, together with a case-study application on a deployed four-node array showing that the algebraic threshold of Corollary~\ref{cor:threshold}, at the deployment's empirically estimated $\sigma_t$, predicts the correct per-subsystem operational ordering with a near-constant calibration ratio between the algebraic and operational thresholds.
\end{enumerate}

The remainder of the paper is organized as follows. Section~\ref{sec:background} reviews the TDoA measurement model, the closed-form solver structure, and GDoP. Section~\ref{sec:theory} develops the theoretical framework. Section~\ref{sec:atlas} presents the empirical atlas. Section~\ref{sec:coupling} quantifies the empirical relationship between $\kn$ and GDoP. Section~\ref{sec:case-study} applies the framework to an operational four-node array. Section~\ref{sec:discussion} discusses practical implications, and Section~\ref{sec:conclusion} concludes.

\section{Background and System Model}
\label{sec:background}

Throughout the paper, vectors are typeset in boldface lowercase ($\bp_i$, $\bq$), matrices in boldface uppercase ($\bm{P}$, $\bPbar$, $\bSigma_T$), and unit vectors with a hat ($\ehat_i$). Subscript $0$ denotes the central reference sensor (CRS) and subscripts $1, \ldots, N$ denote the secondary reference sensors (SRSs). Primed quantities ($\bq'$, $\bp_i'$, $\bm{J}'$) live in the canonical frame.

\subsection{TDoA measurement model and noise}
\label{sec:model}

Consider a network of $N+1$ time-synchronized sensors at positions $\bp_0, \bp_1, \ldots, \bp_N \in \mathbb{R}^N$, with $\bp_0$ designated CRS. A target at unknown position $\bq \in \mathbb{R}^N$ radiates a signal whose times of arrival (ToAs) $t_0, t_1, \ldots, t_N$ at the $N+1$ sensors are recorded. The pairwise TDoA range-differences are
\begin{equation}
R_i \;=\; c\,(t_i - t_0) \;=\; \|\bq - \bp_i\| - \|\bq - \bp_0\|, \quad i = 1, \ldots, N,
\label{eq:Ri}
\end{equation}
where $c$ is the signal-propagation speed. Without loss of generality we adopt the world frame in which the CRS lies at the origin, $\bp_0 = \bm{0}$; this rigid translation leaves \eqref{eq:Ri} and the Jacobian of $\bR$ with respect to $\bq$ unchanged, and reduces $K := \|\bq - \bp_0\|$ to $\|\bq\|$ throughout. Each $R_i$ constrains the target to one sheet of a hyperboloid of revolution (a hyperbola, in 2D) with foci at $\bp_0$ and $\bp_i$. With $N+1$ sensors in $\mathbb{R}^N$, the intersection of $N$ such hyperboloids generically contains exactly two candidate target positions \cite{fang1990, hubacek2022, inamdar2025}, only one of which is the true target; distinguishing the true point from the phantom is the object of any subsequent disambiguation step.

We adopt the standard per-sensor noise model: each ToA $t_i$ is corrupted by zero-mean Gaussian noise of variance $\sigma_t^2$, independent across sensors. The per-sensor variance $\sigma_t$ subsumes timestamping jitter, clock-synchronization error, and signal-processing-induced ToA uncertainty. The induced covariance of the range-difference vector $\bR = (R_1, \ldots, R_N)^{\top}$ follows from $\operatorname{cov}(t_i - t_0, t_j - t_0) = \sigma_t^2 (\delta_{ij} + 1)$ and is
\begin{equation}
\operatorname{cov}(\bR) \;=\; c^2 \sigma_t^2 \, \bSigma_T, \qquad
\bSigma_T \;=\; \bm{I}_N + \bm{1}_N \bm{1}_N^{\top},
\label{eq:SigmaT}
\end{equation}
i.e., $\bSigma_T$ has 2 on the diagonal and 1 on the off-diagonals. In operational settings $\sigma_t$ is typically estimated empirically from per-sample-detrended TDoA residuals against ground-truth target positions, as $\sigma_t = \operatorname{std}(\text{residuals}) / \sqrt{2}$, where the $\sqrt{2}$ accounts for the differencing.

\subsection{Closed-form linearization and the canonical transformation}
\label{sec:linearization}

Equation \eqref{eq:Ri} is nonlinear in $\bq$ through the Euclidean norms. Closed-form algebraic methods \cite{bancroft1985, smithabel1987, fang1990, chanho1994, sunhowan2019, hubacek2022, linwang2024, xingsun2024, inamdar2025} avoid iteration by exploiting algebraic structure. Squaring \eqref{eq:Ri}: writing $r_i := \|\bq - \bp_i\|$ and $K := r_0 = \|\bq\|$, the identity $r_i^2 = (R_i + K)^2$ combined with $r_i^2 = \|\bq\|^2 - 2\,\bq \cdot \bp_i + \|\bp_i\|^2$ produces, after cancellation of the $\|\bq\|^2$ term that does not depend on $i$,
\begin{equation}
\bq \cdot \bp_i \;=\; \tfrac{1}{2}\bigl(\|\bp_i\|^2 - R_i^2\bigr) \;-\; R_i K, \quad i = 1, \ldots, N.
\label{eq:innerprod}
\end{equation}
Each equation in \eqref{eq:innerprod} is now linear in $\bq$ once $K$ is treated as an additional scalar unknown. Stacking the $N$ equations gives
\begin{equation}
\bm{P} \bq \;=\; \bm{\alpha} \;-\; K \bR,
\label{eq:linsys}
\end{equation}
where $\bm{P}$ has rows $\bp_i^{\top}$ and $\alpha_i := \tfrac{1}{2}(\|\bp_i\|^2 - R_i^2)$. The system has $N+1$ scalar unknowns in $N$ linear equations, leaving one residual degree of freedom that the geometric norm constraint $K = \|\bq\|$ will fix.

We apply an orthogonal transformation $\bar{\bm{R}}$, constructed via successive Givens rotations, that maps the world frame to a canonical frame in which $\bp_0' = \bm{0}$ and the SRS positions are lower triangular ($\bp_i'$ has only its first $i$ coordinates nonzero). The linear system \eqref{eq:linsys} becomes
\begin{equation}
\bPbar \bq' \;=\; \bm{\alpha}' \;-\; K \bR,
\label{eq:linsys-canon}
\end{equation}
with $\bPbar$ the $N \times N$ lower-triangular canonical sensor matrix, $\alpha_i' = \tfrac{1}{2}(\|\bp_i'\|^2 - R_i^2)$, and $\bR$ invariant under the (Euclidean) rotation. $\bPbar$ is invertible by forward substitution whenever the sensors are in general position (i.e., not all in a common $(N-1)$-dimensional subspace). The closed-form solution to \eqref{eq:linsys-canon} is the affine position parameterization
\begin{equation}
\bq' \;=\; \bA + K \cdot \bB,
\label{eq:param}
\end{equation}
with $\bA = \bPbar^{-1} \bm{\alpha}'$ and $\bB = -\bPbar^{-1} \bR$. The transformation makes \eqref{eq:linsys} numerically well-posed and exposes the affine $K$-dependence of \eqref{eq:param} ready for the final geometric constraint. The price of this route is that the squaring step in \eqref{eq:innerprod} introduces algebraic degeneracies distinct from those of the iterative route; characterizing them is the subject of Section~\ref{sec:theory}.

\subsection{The $K$-quadratic}
\label{sec:Kquad-derivation}

To close the system into a single scalar equation, we impose $K = \|\bq'\|$, i.e., $K^2 = \|\bq'\|^2$. Substituting \eqref{eq:param},
\begin{equation*}
K^2 \;=\; \|\bA + K\bB\|^2 \;=\; \|\bA\|^2 + 2 K (\bA \cdot \bB) + K^2 \|\bB\|^2,
\end{equation*}
which rearranges to the \emph{$K$-quadratic}:
\begin{equation}
\kn K^2 \;+\; 2(\bA \cdot \bB) K \;+\; \|\bA\|^2 \;=\; 0, \qquad \kn \;:=\; \|\bB\|^2 - 1.
\label{eq:Kquad}
\end{equation}
The two roots of \eqref{eq:Kquad} correspond to the two intersection points of the $N$ hyperboloids; pairing each root with \eqref{eq:param} and inverting the canonical transformation yields the two candidate target positions in world coordinates, with a geometric disambiguation rule (typically $K > 0$, plus a sequential-measurement or prior-position consistency check) selecting the true target.

The leading coefficient $\kn$, the central object of this paper, depends on the TDoA measurements and canonical sensor geometry but not on $K$ itself. The 2D three-sensor case has $\bB = (B, D)^{\top}$ and $\kn = B^2 + D^2 - 1$, which is the indicator $M$ studied by Hub\'{a}{\v c}ek \emph{et al.}~\cite{hubacek2022}. The 3D four-sensor case has $\bB = (B_1, B_2, B_3)^{\top}$ and $\kn = B_1^2 + B_2^2 + B_3^2 - 1$, the leading coefficient of the closed-form solver employed in the operational four-node array of Section~\ref{sec:case-study}. The vectorized definition $\kn = \|\bB\|^2 - 1$ is the natural $N$-dimensional generalization.

\subsection{Geometric Dilution of Precision}
\label{sec:gdop}

Let $\bm{J}$ denote the Jacobian of the range-difference measurement vector $\bR$ with respect to target position $\bq$. Differentiating \eqref{eq:Ri} gives row $i$ as the line-of-sight unit-vector difference $\bm{J}_i = \ehat_i - \ehat_0$, where $\ehat_i = (\bq - \bp_i)/\|\bq - \bp_i\|$. Under the noise model of \eqref{eq:SigmaT}, the CRLB on the variance of any unbiased position estimator gives
\begin{align}
\sigma_{\text{pos}}(\bq) &\;=\; c \, \sigma_t \cdot \tilde{G}(\bq), \nonumber\\
\tilde{G}(\bq) &\;=\; \sqrt{\operatorname{tr}\!\left(\left(\bm{J}^{\top} \bSigma_T^{-1} \bm{J}\right)^{-1}\right)},
\label{eq:GDoP}
\end{align}
with $\tilde{G}$ the dimensionless GDoP. The inner quantity $(\bm{J}^{\top} \bSigma_T^{-1} \bm{J})^{-1}$ is the Fisher-information-matrix inverse for an unbiased estimator of $\bq$ under the noise model \eqref{eq:SigmaT}; the dimensionless $\tilde{G}$ is therefore the trace-CRLB baseline (in units of metres per $c\sigma_t$). $\tilde{G}$ diverges precisely when $\bm{J}$ becomes rank-deficient, i.e., when the rows $\ehat_i - \ehat_0$ become linearly dependent \cite{torrieri1984, sunhowan2019, pinepine2021, yangzheng2024, zhanghan2025}. The dimensionless $\tilde{G}$ has a direct operational counterpart in the circular error probable at the $50$\% confidence level \cite[eq.~(74)]{torrieri1984}, \cite[\S 9.2]{odonoughue2020},
\begin{equation}
\mathrm{CEP}_{50}(\bq) \;\approx\; 0.75 \, c \, \sigma_t \, \tilde{G}(\bq),
\label{eq:CEP50}
\end{equation}
valid in the 2D approximately-isotropic-error regime. The two are proportional at fixed $\sigma_t$; we report results in dimensionless $\tilde{G}$ for geometry-only analysis and in $\mathrm{CEP}_{50}$ wherever an operational threshold is implied.

\section{Theoretical Framework}
\label{sec:theory}

\subsection{Theorem 1: $K$-quadratic identity}
\label{sec:theorem1}

\begin{theorem}[Algebraic identity for $\kn$]
Under the setup of Sections~\ref{sec:model}--\ref{sec:linearization}, at any physical target the analytical TDoA solution satisfies the position parameterization $\bq' = \bA + K \cdot \bB$ of \eqref{eq:param} together with the $K$-quadratic identity \eqref{eq:Kquad}, where $\kn = \|\bB\|^2 - 1$.
\label{thm:1}
\end{theorem}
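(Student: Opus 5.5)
The proof retraces the derivation of Sections~\ref{sec:linearization}--\ref{sec:Kquad-derivation}, starting from a true target $\bq$ and checking at each step that we only pass through identities, never through choices of root. Fix a physical target $\bq$ with $\bp_0=\bm{0}$ and noiseless range differences $R_i$ as in \eqref{eq:Ri}, and put $K=\|\bq\|$. Then $r_i=R_i+K$ holds exactly. Squaring this and expanding $r_i^2=\|\bq-\bp_i\|^2$ gives \eqref{eq:innerprod} for each $i$. Stacking the $N$ equations gives \eqref{eq:linsys}. The first point to record is that $\bq$ and this particular value of $K$ satisfy the linear system simultaneously: $K$ is not a free parameter here but the actual range to the CRS.

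Next comes the canonical frame. Let $\bar{\bm{R}}$ be the orthogonal map with $\bq'=\bar{\bm{R}}\bq$ and $\bp_i'=\bar{\bm{R}}\bp_i$. Orthogonality gives two facts:
\begin{itemize}
\item $\bq\cdot\bp_i=\bq'\cdot\bp_i'$ and $\|\bp_i\|=\|\bp_i'\|$, so \eqref{eq:innerprod} carries over row by row to \eqref{eq:linsys-canon} with $\bPbar$ having rows $\bp_i'^{\top}$;
\item $\|\bq'\|=\|\bq\|=K$.
\end{itemize}
The $R_i$ are scalars defined from distances, so they are unchanged. Under the general-position hypothesis, $\bPbar$ is lower triangular with nonzero diagonal, hence invertible. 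Applying $\bPbar^{-1}$ to both sides of \eqref{eq:linsys-canon} gives $\bq'=\bPbar^{-1}\bm{\alpha}'-K\bPbar^{-1}\bR=\bA+K\bB$, which is \eqref{eq:param}.

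To obtain the quadratic, substitute \eqref{eq:param} into $K^2=\|\bq'\|^2$. Expanding the squared norm and moving $K^2$ across gives \eqref{eq:Kquad} with leading coefficient $\kn=\|\bB\|^2-1$.

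The only real subtlety is interpretive, not computational. The theorem asserts that the true $K$ is a root of \eqref{eq:Kquad}, not that every root is physical. Squaring can introduce spurious solutions, for example ones with $R_i+K<0$, but that affects only the converse direction, which the theorem does not claim. I would also note the degenerate case explicitly: when $\kn=0$ the "quadratic" is linear, or even identically zero, but the identity still holds at the true $K$. This remark prepares the ground for the singularity analysis in the theorems that follow.
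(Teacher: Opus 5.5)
Your proposal is correct and follows essentially the same route as the paper's proof. Both square \eqref{eq:Ri} to get $\bP\bq = \bm{\alpha} - K\bR$, rotate to the canonical frame, invert $\bPbar$ to obtain $\bq' = \bA + K\bB$, and substitute into $\|\bq'\|^2 = K^2$. Your extra bookkeeping (orthogonal invariance of inner products and norms, invertibility of $\bPbar$ from general position, and the remark that only the direction ``true $K$ is a root'' is claimed) makes explicit what the paper's terse proof leaves implicit.
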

\begin{IEEEproof}
Squaring \eqref{eq:Ri} and collecting terms in $\bq$ yields the linear system $\bP \bq = \bm{\alpha} - K \bR$ in the unrotated frame. After the canonical transformation $\bar{\bm{R}}$, the system becomes $\bPbar \bq' = \bm{\alpha}' - K \bR$ with $\bPbar$ lower triangular. Solving for $\bq'$ gives \eqref{eq:param}. Imposing $\|\bq'\|^2 = K^2$ produces
\[
\|\bA\|^2 + 2 K (\bA \cdot \bB) + K^2 \|\bB\|^2 = K^2,
\]
which is \eqref{eq:Kquad}.
\end{IEEEproof}

\paragraph{Runtime computation} Theorem~\ref{thm:1} prescribes a direct recipe for $\kn$ without prior knowledge of the source position. Given $\bp_0, \ldots, \bp_N$ and $\{\tau_i\}_{i=1}^N$: (i) compute $R_i = c\tau_i$ and $\bPbar$ by Givens rotations of $\{\bp_i - \bp_0\}$; (ii) solve $\bPbar \bA = \bm{\alpha}'$ and $\bPbar \bB = -\bR$ by forward substitution; (iii) read $\kn = \|\bB\|^2 - 1$. Step~(i) is once per array; steps~(ii)--(iii) per sample cost $O(N^2)$. Reporting $\kn$ alongside the solution adds no work beyond a single dot product on $\bB$.

\subsection{Algebraic identities (T2.1, T2.2): $\kn \leftrightarrow \bm{J}$ and $\kn \leftrightarrow \dn$}
\label{sec:theorem2}

Proposition~\ref{thm:T21} (T2.1) factors the determinant of the TDoA Jacobian into a sensor-only piece and a single target-dependent piece that vanishes precisely when $\kn K^2 = \|\bA\|^2$; Theorem~\ref{thm:T22} (T2.2) shows that this same target-dependent factor governs the discriminant of the $K$-quadratic, so the ``two roots coincide'' picture (branch merge) and the ``inverse map is non-invertible'' picture (Jacobian singularity) describe the same algebraic condition on the target. T2.1 is presented as a Proposition because, given the augmented hyperbolic linear system of \cite{schauRobinson1987, fang1990, smithabel1987tassp, chanho1994}, the factorization is a matrix-determinant-lemma manipulation; the substantive contribution is the explicit form in solver-state coordinates $(\kn, K, \|\bA\|^2)$, which is the form that connects directly to the discriminant identity T2.2 and to the closed-form $\sigma_\kappa$ of Section~\ref{sec:theorem3}.

\begin{proposition}[Jacobian determinant identity (T2.1)]
In the canonical frame, the $N \times N$ Jacobian $\bm{J}'$ with rows $\bm{J}'_i = \ehat_i' - \ehat_0'$ has determinant
\begin{equation}
\det \bm{J}'(\bq') = (-1)^{N+1} \frac{\det \bPbar}{2 K^2 \prod_{i=1}^N r_i} \, (\kn K^2 - \|\bA\|^2),
\label{eq:T21}
\end{equation}
where $r_i = \|\bq - \bp_i\|$ and the product runs over the $N$ SRSs.
\label{thm:T21}
\end{proposition}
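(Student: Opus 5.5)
The plan is to obtain $\bm{J}'$ by implicitly differentiating the linear identity \eqref{eq:linsys-canon}, which holds identically in $\bq'$ at every physical target, rather than manipulating the unit vectors $\ehat_i'$ directly. Regard $\bR = \bR(\bq')$ via \eqref{eq:Ri} and $K = K(\bq') = \|\bq'\|$. By Theorem~\ref{thm:1}, the identity $\bPbar \bq' = \bm{\alpha}'(\bR(\bq')) - K(\bq')\,\bR(\bq')$ then holds on an open set of targets, so it may be differentiated with respect to $\bq'$.

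The pieces of the derivative are as follows:
\begin{itemize}
\item Since $\partial \alpha_i'/\partial R_i = -R_i$ and $\partial \bR/\partial \bq' = \bm{J}'$, we get $\partial \bm{\alpha}'/\partial \bq' = -\operatorname{diag}(R_i)\,\bm{J}'$.
\item Since $\nabla K = \ehat_0'$, we get $\partial (K\bR)/\partial \bq' = \bR\,\ehat_0'^{\top} + K \bm{J}'$.
\end{itemize}
Collecting terms and using $R_i + K = r_i$ from \eqref{eq:Ri} gives
\[
\bPbar = -\operatorname{diag}(r_i)\,\bm{J}' - \bR\,\ehat_0'^{\top},
\]
and hence
\[
\det \bm{J}' = \frac{(-1)^N \det\bigl(\bPbar + \bR\,\ehat_0'^{\top}\bigr)}{\prod_{i=1}^N r_i}.
\]

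Next, assuming $\bPbar$ is invertible (general position), apply the matrix determinant lemma:
\[
\det\bigl(\bPbar + \bR\,\ehat_0'^{\top}\bigr) = \det\bPbar\,\bigl(1 + \ehat_0'^{\top}\bPbar^{-1}\bR\bigr) = \det\bPbar\,\bigl(1 - \ehat_0'\cdot\bB\bigr),
\]
using $\bB = -\bPbar^{-1}\bR$. Substitute $\ehat_0' = \bq'/K = (\bA + K\bB)/K$ from \eqref{eq:param}. This gives
\[
1 - \ehat_0'\cdot\bB = \frac{K - \bA\cdot\bB - K\|\bB\|^2}{K} = \frac{-\kn K - \bA\cdot\bB}{K}.
\]

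The final step removes the cross term $\bA\cdot\bB$ using the $K$-quadratic \eqref{eq:Kquad}. At a physical target (so $K \neq 0$) it gives $\bA\cdot\bB = -(\kn K^2 + \|\bA\|^2)/(2K)$. Substituting,
\[
-\kn K - \bA\cdot\bB = \frac{\|\bA\|^2 - \kn K^2}{2K},
\]
and so
\[
\det \bm{J}' = (-1)^N \frac{\det\bPbar}{2K^2\prod_{i=1}^N r_i}\,\bigl(\|\bA\|^2 - \kn K^2\bigr).
\]
Absorbing the sign into the last factor yields the $(-1)^{N+1}$ form of \eqref{eq:T21}.

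I expect the main obstacle to be the bookkeeping in the differentiation step rather than the determinant algebra. Three points need care. First, the chain-rule terms must combine into $\operatorname{diag}(r_i)$ with the correct sign. Second, the row convention $\bm{J}'_i = \ehat_i' - \ehat_0'$ must match $\partial R_i/\partial\bq'$. Third, Theorem~\ref{thm:1} must be invoked as an identity in $\bq'$, not merely at one point, so that differentiation is legitimate. Away from the excluded sets $K = 0$ and $r_i = 0$ everything is smooth, and the formula then extends by continuity wherever both sides are defined.
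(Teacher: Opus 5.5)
Your proof is correct and follows the paper's route. Both arguments reduce to the factorization $\operatorname{diag}(r_i)\,\bm{J}' = -(\bPbar + \bR\,\ehat_0'^{\top})$, apply the matrix determinant lemma to obtain the factor $1 - \ehat_0'\cdot\bB$, substitute $\bq' = \bA + K\bB$, and eliminate $\bA\cdot\bB$ using the $K$-quadratic.

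The only difference is how you reach that factorization: you implicitly differentiate the linear identity \eqref{eq:linsys-canon}, which holds on the open set of non-sensor targets. The paper instead writes each row $\ehat_i' - \ehat_0'$ directly as $\bq'^{\top}(1/r_i - 1/K) - \bp_i'^{\top}/r_i$ and uses $r_i = K + R_i$. Both give the same intermediate matrix.
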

\begin{IEEEproof}
Row $i$ of $\bm{J}'$ in canonical frame is $\bq'^{\top}(1/r_i - 1/K) - \bp_i'^{\top}/r_i$. Stacking gives $\bm{J}' = \bm{u} \bq'^{\top} - \bm{D} \bPbar$, where $\bm{u}_i = 1/r_i - 1/K$ and $\bm{D} = \operatorname{diag}(1/r_i)$. Using $r_i = K + R_i$ to identify $\bm{D}^{-1} \bm{u} = -\bR/K$, and applying the matrix determinant lemma to $\bPbar + (1/K)\bR\bq'^{\top}$ yields
\[
\det \bm{J}' = (-1)^N \frac{\det \bPbar}{\prod r_i} \left(1 - \frac{\bq'^{\top} \bB}{K}\right).
\]
Substituting $\bq'^{\top} \bB = \bA \cdot \bB + K\|\bB\|^2$ from \eqref{eq:param} and using $\kn = \|\bB\|^2 - 1$ gives
\[
1 - \frac{\bq'^{\top}\bB}{K} = -\kn - \frac{\bA \cdot \bB}{K}.
\]
The $K$-quadratic \eqref{eq:Kquad} gives $\bA \cdot \bB = -(\kn K^2 + \|\bA\|^2)/(2K)$, hence
\[
-\kn - \frac{\bA \cdot \bB}{K} = -\frac{\kn K^2 - \|\bA\|^2}{2K^2}.
\]
Combining the leading $(-1)^N$ factor with this minus sign yields \eqref{eq:T21}.
\end{IEEEproof}

\begin{theorem}[Discriminant identity (T2.2)]
Assume the sensors are in general position ($\det \bPbar \neq 0$) and the target is not coincident with any sensor (all $r_i > 0$). On the constraint $K = \|\bq'\|$, the $K$-quadratic discriminant satisfies
\begin{equation}
\dn := (\bA \cdot \bB)^2 - \kn \|\bA\|^2 = \frac{(\kn K^2 - \|\bA\|^2)^2}{4 K^2}.
\label{eq:T22}
\end{equation}
Under these assumptions, $\dn = 0$ if and only if $\det \bm{J}' = 0$ at the same target position.
\label{thm:T22}
\end{theorem}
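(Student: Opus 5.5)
The plan is to prove the identity purely algebraically from the $K$-quadratic of Theorem~\ref{thm:1}, and then read off the equivalence using Proposition~\ref{thm:T21}. On the constraint $K = \|\bq'\|$, Theorem~\ref{thm:1} guarantees that the true $K$ satisfies \eqref{eq:Kquad}: $\kn K^2 + 2(\bA\cdot\bB)K + \|\bA\|^2 = 0$. Because all $r_i>0$ and in particular $K = r_0 > 0$, I can divide by $2K$ and write
\[
\bA\cdot\bB = -\frac{\kn K^2 + \|\bA\|^2}{2K},
\]
which is the same substitution used in the proof of Proposition~\ref{thm:T21}.

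Next I would insert this into $\dn = (\bA\cdot\bB)^2 - \kn\|\bA\|^2$, which gives
\[
\dn = \frac{(\kn K^2 + \|\bA\|^2)^2 - 4\kn K^2\|\bA\|^2}{4K^2}.
\]
The numerator is a difference of the form $(x+y)^2 - 4xy = (x-y)^2$ with $x = \kn K^2$ and $y = \|\bA\|^2$. This yields \eqref{eq:T22} directly. No sign assumption on $\kn$ is needed, since the expansion is a polynomial identity.

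For the equivalence, \eqref{eq:T22} together with $K>0$ shows that $\dn = 0$ if and only if $\kn K^2 - \|\bA\|^2 = 0$. Proposition~\ref{thm:T21} expresses $\det\bm{J}'$ as this same factor times the prefactor $(-1)^{N+1}\det\bPbar/(2K^2\prod_i r_i)$. That prefactor is finite and nonzero exactly under the stated hypotheses: $\det\bPbar\neq 0$ by general position, and $K, r_i > 0$ because the target is not coincident with any sensor. Hence $\det\bm{J}' = 0$ if and only if $\kn K^2 = \|\bA\|^2$, which holds if and only if $\dn=0$.

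I expect the only delicate point to be hypothesis bookkeeping rather than computation. Division by $K$ and by $\prod r_i$ must be justified, and the identity holds only on the constraint surface (the true target), not for arbitrary $K$. I would also note that $\det\bm{J}'$ and $\det\bm{J}$ differ only by the determinant of the orthogonal transformation $\bar{\bm{R}}$, which is $\pm 1$. The vanishing condition therefore transfers to the world-frame Jacobian, and hence to the GDoP divergence locus.
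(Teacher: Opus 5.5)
Your proposal is correct and follows the paper's proof essentially step for step: you substitute $\bA\cdot\bB = -(\kn K^2 + \|\bA\|^2)/(2K)$ from the $K$-quadratic, collapse the numerator via $(x+y)^2 - 4xy = (x-y)^2$, and read off the equivalence from Proposition~\ref{thm:T21}, using that the prefactor $\det\bPbar/(2K^2\prod_i r_i)$ is nonzero under the stated hypotheses. Your remarks that $K = r_0 > 0$ is needed for the division (it follows from the target not coinciding with the CRS) and that $\det\bm{J}' = \pm\det\bm{J}$ spell out two points the paper leaves implicit.
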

\begin{IEEEproof}
From \eqref{eq:Kquad}, $\bA \cdot \bB = -(\kn K^2 + \|\bA\|^2)/(2K)$, so $(\bA \cdot \bB)^2 = (\kn K^2 + \|\bA\|^2)^2/(4 K^2)$. Expanding:
\begin{align*}
\dn &= \frac{(\kn K^2 + \|\bA\|^2)^2}{4 K^2} - \kn \|\bA\|^2 \\
    &= \frac{(\kn K^2)^2 + 2\kn K^2 \|\bA\|^2 + \|\bA\|^4 - 4 \kn K^2 \|\bA\|^2}{4 K^2} \\
    &= \frac{(\kn K^2 - \|\bA\|^2)^2}{4 K^2},
\end{align*}
which is \eqref{eq:T22}. Under the stated regularity assumptions, $\det \bPbar / (2K^2 \prod r_i) \neq 0$, so by \eqref{eq:T21} $\det \bm{J}' = 0$ if and only if $\kn K^2 - \|\bA\|^2 = 0$; by \eqref{eq:T22} the same condition is equivalent to $\dn = 0$.
\end{IEEEproof}

\subsection{Singularity structure}
\label{sec:singularity-structure}

Combining Proposition~\ref{thm:T21} and Theorem~\ref{thm:T22} yields a clean classification: the multilateration system has exactly \emph{two} distinct (codimension-one) singularity loci in target space.

\begin{enumerate}
\item \textbf{$\kn(\bq) = 0$:} the leading coefficient of the $K$-quadratic vanishes; the quadratic degenerates to a linear equation and one of its two roots escapes to infinity. We refer to this as the \emph{branch-divergence} locus.
\item \textbf{$\det \bm{J}(\bq) = 0$:} equivalently (by Theorem \ref{thm:T22}) $\dn(\bq) = 0$; the two roots of the $K$-quadratic coincide (\emph{branch-merge}), and the linearized inverse map from TDoA to position is rank-deficient (\emph{Jacobian} interpretation). GDoP diverges on this locus.
\end{enumerate}

The two loci are distinct: at $\kn = 0$ (off the intersection set) $\det \bm{J}$ is generically nonzero, and conversely. By \eqref{eq:T21}, $\det \bm{J} = 0$ requires $\kn K^2 = \|\bA\|^2$; intersecting this with $\kn = 0$ forces $\|\bA\|^2 = 0$, so the two loci meet on the \emph{doubly-singular set} characterized by $\kn = 0$ and $\|\bA\|^2 = 0$. This set has codimension two and corresponds to specific collinearity configurations of target and sensor pair. The Jacobian-singular locus along sensor-pair baselines is a known phenomenon (Torrieri \cite[Sec.~V, eq.~(103)]{torrieri1984}; the resulting lobed CRLB-driven CEP pattern for a three-sensor array appears in textbook form in O'Donoughue \cite[Fig.~11.6]{odonoughue2020}).

\subsection{Theorem 3: CRLB-linked noise sensitivity of $\kn$}
\label{sec:theorem3}

Theorem~\ref{thm:T3} treats $\kn$ as a random variable induced by Gaussian per-sensor ToA noise and gives its first-order variance in closed form, propagated through the same noise model that underlies the CRLB.

\begin{theorem}[Closed-form $\sigma_\kappa$]
Under the per-sensor independent-noise convention $\operatorname{cov}(t_i) = \sigma_t^2$ giving $\operatorname{cov}(\bR) = c^2 \sigma_t^2 \bSigma_T$ as in \eqref{eq:SigmaT}, the linearized variance of $\kn$ at target $\bq$ is
\begin{equation}
\sigma_\kappa^2(\bq) = (\nabla_{\bR} \kn)^{\top} \, c^2 \sigma_t^2 \bSigma_T \, (\nabla_{\bR} \kn),
\label{eq:T3}
\end{equation}
with $\nabla_{\bR} \kn = -2 \bPbar^{-\top} \bB$.
\label{thm:T3}
\end{theorem}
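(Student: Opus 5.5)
The plan is a first-order (delta-method) propagation of the range-difference noise through the runtime recipe following Theorem~\ref{thm:1}. The key observation is that $\kn$ is an explicit function of the measured range-difference vector $\bR$ alone. The canonical sensor matrix $\bPbar$ is built only from the sensor positions, which are taken as noise-free. $\bR$ is invariant under the rotation $\bar{\bm{R}}$, as noted after \eqref{eq:linsys-canon}. Moreover $\kn = \|\bB\|^2 - 1$ does not involve $\bA$. Hence $\kn(\bR) = \|\bPbar^{-1}\bR\|^2 - 1$, a quadratic form in $\bR$ with a fixed, geometry-only matrix $\bPbar^{-\top}\bPbar^{-1}$. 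Its value at the true target is obtained by evaluating it at the noise-free $\bR(\bq)$ from \eqref{eq:Ri}.

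\textbf{Step 1 (gradient).} Since $\bB = -\bPbar^{-1}\bR$ is linear in $\bR$, its Jacobian is $\partial\bB/\partial\bR = -\bPbar^{-1}$, which is constant. The chain rule applied to $\kn = \bB^{\top}\bB - 1$ gives
\[
\nabla_{\bR}\kn = 2\,(\partial\bB/\partial\bR)^{\top}\bB = -2\,\bPbar^{-\top}\bB,
\]
which is the stated gradient. Equivalently, one can differentiate $\bR^{\top}\bPbar^{-\top}\bPbar^{-1}\bR$ directly to get $2\bPbar^{-\top}\bPbar^{-1}\bR$, and then substitute $\bPbar^{-1}\bR = -\bB$.

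\textbf{Step 2 (linearization).} Write the noisy measurement as $\bR + \bm{\delta}$, with $\bm{\delta}$ zero-mean Gaussian. By \eqref{eq:SigmaT}, $\bm{\delta}$ has covariance $c^2\sigma_t^2\bSigma_T$. Because $\kn$ is exactly quadratic in $\bR$, its expansion terminates:
\[
\kn(\bR+\bm{\delta}) - \kn(\bR) = (\nabla_{\bR}\kn)^{\top}\bm{\delta} + \|\bPbar^{-1}\bm{\delta}\|^2 .
\]
Keeping only the linear term gives a scalar random variable $(\nabla_{\bR}\kn)^{\top}\bm{\delta}$. For a fixed vector $\bm{g}$, the variance of $\bm{g}^{\top}\bm{\delta}$ is $\bm{g}^{\top}\operatorname{cov}(\bm{\delta})\,\bm{g}$. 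Taking $\bm{g} = \nabla_{\bR}\kn$ yields \eqref{eq:T3} immediately, and the gradient is evaluated at the noise-free $\bR(\bq)$, making $\sigma_\kappa$ a function of $\bq$, $\sigma_t$ and geometry.

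\textbf{Main obstacle.} The substantive point is justifying the word ``linearized'', that is, quantifying what is discarded. I would handle this with the exact expansion above. The remainder $\|\bPbar^{-1}\bm{\delta}\|^2$ is of order $c^2\sigma_t^2\|\bPbar^{-1}\|^2$. It contributes a bias of $c^2\sigma_t^2\operatorname{tr}(\bPbar^{-\top}\bPbar^{-1}\bSigma_T)$ and a variance correction of order $\sigma_t^4$. The cross term vanishes by Gaussian odd-moment symmetry. Therefore \eqref{eq:T3} is exact to leading order in $\sigma_t$. It is accurate whenever $c\sigma_t\|\bPbar^{-1}\|\,\ll\,\|\bB\|$. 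It degrades only near $\bB = \bm{0}$, where the gradient itself vanishes.

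A secondary point, which I would state but not belabor, is that errors in sensor positions are outside the model. This is consistent with the CRLB convention underlying \eqref{eq:GDoP}, so $\bPbar$ is deterministic.
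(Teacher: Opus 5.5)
Your proposal is correct and follows the paper's proof essentially step for step. You write $\kn$ as the geometry-fixed quadratic form $\bR^{\top}\bPbar^{-\top}\bPbar^{-1}\bR - 1$, rewrite its gradient as $-2\bPbar^{-\top}\bB$ via $\bPbar^{-1}\bR = -\bB$, and apply delta-method propagation with $\operatorname{cov}(\bR) = c^2\sigma_t^2\bSigma_T$. Your exact second-order remainder analysis goes beyond the paper, which leaves the accuracy of the linearization to its Monte~Carlo validation, and it is correct: a bias of $c^2\sigma_t^2\trace(\bPbar^{-\top}\bPbar^{-1}\bSigma_T)$, a cross term that vanishes by Gaussian odd-moment symmetry, and an $O(\sigma_t^4)$ variance correction.
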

\begin{IEEEproof}
Since $\kn = \|\bB(\bR)\|^2 - 1 = \bR^{\top} (\bPbar^{-\top} \bPbar^{-1}) \bR - 1$, the gradient is $\nabla_{\bR} \kn = 2 (\bPbar^{-\top} \bPbar^{-1}) \bR$. Substituting $\bPbar^{-1} \bR = -\bB$ gives $\nabla_{\bR} \kn = -2 \bPbar^{-\top} \bB$. The linearized (first-order Taylor) variance of any smooth scalar $f(\bR)$ of a Gaussian $\bR \sim \mathcal{N}(\bar{\bR}, \bm{\Sigma})$ at the mean is $\operatorname{Var}(f) \approx (\nabla_{\bR} f)^{\top} \bm{\Sigma} (\nabla_{\bR} f)$, the standard delta-method variance propagation used throughout the CRLB-linked TDoA literature \cite{chanho1994, sunhowan2019, beckstoicali2008}. Setting $f = \kn$ and $\bm{\Sigma} = c^2 \sigma_t^2 \bSigma_T$ gives \eqref{eq:T3}; novelty lies in the closed-form gradient $\nabla_{\bR} \kn = -2 \bPbar^{-\top} \bB$, not in the variance-propagation machinery itself.
\end{IEEEproof}

\begin{corollary}[Threshold from theory]
A useful per-subsystem threshold for flagging samples on which $\kn$ is statistically indistinguishable from zero takes the form $\varepsilon_s \geq k \cdot \sigma_\kappa^{(\text{typ})}$, where $\sigma_\kappa^{(\text{typ})}$ is the median of $\sigma_\kappa$ over the operational target envelope for subsystem $s$. The choice $k = 3$ corresponds to the standard three-sigma criterion: a measured $|\kn| < 3 \sigma_\kappa^{(\text{typ})}$ has probability less than $0.3\%$ of arising from a non-degenerate $\kn(\bq) \gg 0$ target, so samples falling below $\varepsilon_s$ are flagged as algebraically ill-conditioned. The factor $k$ may be tuned higher for more conservative gating or lower at the cost of higher false-flag rate.
\label{cor:threshold}
\end{corollary}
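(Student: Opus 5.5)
The plan is to read Corollary~\ref{cor:threshold} as a statement about the first-order distribution of the measured $\kn$, and to derive the three-sigma rule directly from Theorem~\ref{thm:T3}.

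The first step is to fix the random-variable model. By Theorem~\ref{thm:T3}, at a given target $\bq$ the measured coefficient $\hat\kappa_N = \|\bB(\bR)\|^2-1$ is, to first order in $c\sigma_t$, distributed as
\[
\hat\kappa_N \approx \mathcal{N}\bigl(\kn(\bq),\,\sigma_\kappa^2(\bq)\bigr).
\]
This holds because $\bB$ depends linearly on $\bR$ through $\bB=-\bPbar^{-1}\bR$, so $\kn$ is a quadratic form in the Gaussian vector $\bR$. Its delta-method linearization is exact up to a term of size $c^2\sigma_t^2\|\bPbar^{-1}\|^2$. I would record this remainder explicitly: it is a bias equal to $c^2\sigma_t^2\,\trace(\bPbar^{-\top}\bPbar^{-1}\bSigma_T)$, plus a chi-square-type fluctuation of the same order. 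Both are second order relative to $\sigma_\kappa$ whenever $\|\bB\|\gg c\sigma_t\|\bPbar^{-1}\|$. That condition is the regime in which the corollary is meant to apply.

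The second step is to pass from the pointwise $\sigma_\kappa(\bq)$ to the per-subsystem $\sigma_\kappa^{(\mathrm{typ})}$. I would treat the threshold as a hypothesis test in which the null hypothesis is that the true $\kn(\bq)$ is bounded away from zero. Under the Gaussian approximation, the event $|\hat\kappa_N - \kn(\bq)| \ge k\sigma_\kappa(\bq)$ has probability $2(1-\Phi(k))$, which is about $0.27\%$ for $k=3$. If $\kn(\bq)\gg k\sigma_\kappa(\bq)$, then observing $|\hat\kappa_N|<k\sigma_\kappa$ requires a deviation of at least $\kn(\bq)-k\sigma_\kappa$. A deviation that large is even rarer, which gives the stated bound below $0.3\%$. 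Replacing $\sigma_\kappa(\bq)$ by the envelope median is a design choice rather than a deduction. I would state it as such: the bound is exact for targets whose $\sigma_\kappa(\bq)\le\sigma_\kappa^{(\mathrm{typ})}$, and it is approximate elsewhere. This caveat explains why the corollary says ``$\varepsilon_s\ge$'' rather than equality. Monotonicity in $k$ follows from the monotonicity of $\Phi$, which covers the tuning remark.

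The main obstacle is making the probability claim honest despite two approximations. The first is the quadratic (non-Gaussian) nature of $\hat\kappa_N$ near $\kn=0$. The second is the heterogeneity of $\sigma_\kappa(\bq)$ across the envelope. For the first, I would try to bound the second-order term via $\|\bPbar^{-1}\bm{\epsilon}\|^2$ with $\bm{\epsilon}\sim\mathcal{N}(0,c^2\sigma_t^2\bSigma_T)$. This shows the correction is negligible in the non-degenerate regime, which is the only regime in which the probability statement is made. For the second, I would simply condition on targets where the pointwise $\sigma_\kappa$ does not exceed the typical value, and note that the Monte Carlo validation of Theorem~\ref{thm:T3} supports the approximation elsewhere.
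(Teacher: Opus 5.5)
Your central step, taking the first-order law $\hat\kappa_N\approx\mathcal{N}(\kn(\bq),\sigma_\kappa^2(\bq))$ from Theorem~\ref{thm:T3} and reading off a normal tail, is exactly what the paper relies on. The corollary has no proof beyond its appeal to the three-sigma rule, and your one-sided argument for the $0.3\%$ figure is sound. You do not even need the ``exact where $\sigma_\kappa(\bq)\le\sigma_\kappa^{(\text{typ})}$, approximate elsewhere'' caveat. The bound holds at every target once $\kn(\bq)\ge k\sigma_\kappa^{(\text{typ})}+3\sigma_\kappa(\bq)$, since then $\Pr(|\hat\kappa_N|<k\sigma_\kappa^{(\text{typ})})\le\Phi(-3)\approx 0.135\%$. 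The gap is that you analyse only non-degenerate targets, and that side cannot account for the form of the threshold. Enlarging $\varepsilon_s$ only widens the flag band, so it only \emph{raises} the false-flag probability. Your heterogeneity caveat therefore cannot explain ``$\varepsilon_s\ge$''; it points the other way.

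The missing piece is the degenerate side. This is the three-sigma criterion proper and the content of ``statistically indistinguishable from zero.'' If $\kn(\bq)=0$, then $\hat\kappa_N\approx\mathcal{N}(0,\sigma_\kappa^2(\bq))$ and $\Pr(|\hat\kappa_N|<k\sigma_\kappa(\bq))=2\Phi(k)-1\approx 99.73\%$ for $k=3$. A band built from the median under-covers the half of the envelope where $\sigma_\kappa(\bq)>\sigma_\kappa^{(\text{typ})}$, which is why $\varepsilon_s$ gets a lower bound.

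Your worry about non-Gaussianity ``near $\kn=0$'' is misplaced, and correcting it is what licenses this degenerate-side statement. At $\kn=0$ one has $\|\bB\|=1$. There the remainder $\|\bPbar^{-1}\bm{\epsilon}\|^2$ (with $\bm{\epsilon}$ the range-difference noise) is smaller than the linear term by a factor at most of order $c\sigma_t\|\bPbar^{-1}\|$ times the condition number of $\bPbar$, which is small at the paper's kilometre baselines. The linearization fails only near $\bB=\bm{0}$, i.e.\ $\kn\approx-1$, far from the flag band.

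Finally, once both error types are in view, monotonicity of $\Phi$ gives the tuning remark a definite sign, and you should state it. Lowering $k$ decreases the false-flag probability $\Phi(k-\kn/\sigma_\kappa)$ and increases the miss probability $2(1-\Phi(k))$. That is the reverse of the literal wording ``lower at the cost of higher false-flag rate,'' unless a false flag is read as a missed one.
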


\subsection{Bridge to GDoP}
\label{sec:bridge-to-gdop}

By construction, GDoP diverges as $\det \bm{J} \to 0$; by Theorem \ref{thm:T22}, this locus coincides with $\dn = 0$. The two independent geometry-quality diagnostics are therefore $\kn$ and $\det \bm{J}$ (equivalently $\dn$ or GDoP), giving the $2 \times 2$ classification:
\begin{center}
\begin{tabular}{l|cc}
& GDoP OK & GDoP bad \\ \hline
$\kn$ OK & well-conditioned & branch-merge regime \\
$\kn$ bad & branch-divergence regime & doubly-singular set \\
\end{tabular}
\end{center}
Targets in the off-diagonal regimes are detected by one diagnostic but not the other; these are the cases where the $\kn$ layer adds information beyond GDoP.

\section{Empirical Atlas in Dimensionless Coordinates}
\label{sec:atlas}

To make the analysis site-agnostic, we work in dimensionless coordinates. After the canonical transformation, the 2D geometry is fully specified by canonical coefficients $(a, b, c)$ with SRSs at $(a, 0)$ and $(b, c)$; we normalize by $a$ to obtain
\begin{equation}
\beta = b/a, \quad \gamma = c/a,
\end{equation}
and likewise the dimensionless target polar coordinates $\tilde{r} = r/a$ and $\theta$. The atlas is computed on a $20 \times 20 \times 60 \times 60 = 1.44 \times 10^6$ grid with $\beta, \gamma \in [-1.5, 1.5]$ (20 points linear each), $\tilde{r} \in [0.2, 10]$ (60 points log-spaced), and $\theta \in [0, 2\pi)$ (60 points linear). At each grid point we record $\kn$, $\dn$, $\det \bm{J}$, $\tilde{G}$, and the theoretical $\sigma_\kappa$ at $\sigma_t = 1$.

\subsection{Numerical verification of Theorems 1--3}
\label{sec:numerical-verif}

Identity \eqref{eq:T21} was checked at every atlas point by comparing $\det \bm{J}'$ computed directly from $\ehat_i' - \ehat_0'$ against the right-hand-side of \eqref{eq:T21}. The maximum relative residual across the $1.44 \times 10^6$ atlas points is $2.2 \times 10^{-13}$, with median residual $3.5 \times 10^{-17}$, both at the floating-point round-off floor. An independent set of 1500 random configurations (1000 two-dimensional, 500 three-dimensional) gives a maximum relative residual of $6.8 \times 10^{-11}$, again at the conditioning-limited floor. Identity \eqref{eq:T21} is therefore exact and established empirically across six decades.

Identity \eqref{eq:T22} is verified in Figure~\ref{fig:T22}, which plots $|\det \bm{J}|$ against $\sqrt{\dn}$ for 5000 atlas points. The points fall on a single unit-slope line in log-log coordinates, with vertical scatter confined to the round-off floor.

\begin{figure}[t]
\centering
\includegraphics[width=0.8\columnwidth]{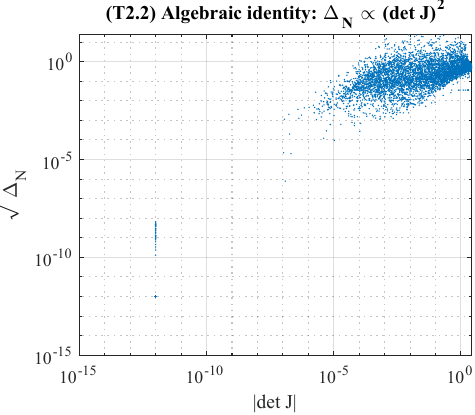}
\caption{Numerical verification of identity (T2.2): $|\det \bm{J}|$ vs $\sqrt{\dn}$ at $5000$ atlas points. Unit slope confirms that $\dn$ and $\det \bm{J}$ share the same zero locus.}
\label{fig:T22}
\end{figure}

For Theorem~\ref{thm:T3}, 200 random $(\beta, \gamma, \tilde{r}, \theta)$ configurations were drawn with the deployment-realistic baseline $a = 16.5$~km. At each configuration, the ToA vector was Monte-Carlo perturbed with per-sensor Gaussian noise of $\sigma_t = 30$~ns over $n_{\text{mc}} = 500$ realizations. Figure~\ref{fig:sigma_kappa_verify} compares closed-form prediction \eqref{eq:T3} against empirical standard deviation across roughly six decades. The points cluster tightly along $y = x$: median relative error $2.0\%$, $p_{95} = 5.8\%$, maximum $10.2\%$, consistent with the finite-sample Monte-Carlo standard error $1/\sqrt{2(n_{\text{mc}}-1)} \approx 3.2\%$. The linearized closed form \eqref{eq:T3} therefore agrees with the full nonlinear noise propagation to within sampling noise.

\begin{figure}[t]
\centering
\includegraphics[width=0.8\columnwidth]{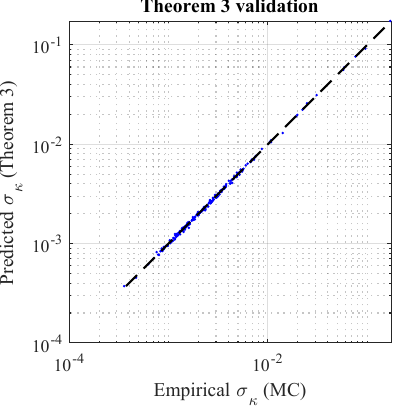}
\caption{Closed-form $\sigma_\kappa$ (Theorem~\ref{thm:T3}) vs Monte-Carlo empirical $\sigma_\kappa$ at $200$ random $(\beta, \gamma, \tilde{r}, \theta)$ configurations ($\sigma_t = 30$~ns, $n_{\text{mc}} = 500$). Dashed: $y = x$.}
\label{fig:sigma_kappa_verify}
\end{figure}

\begin{figure}[t]
\centering
\includegraphics[width=\columnwidth]{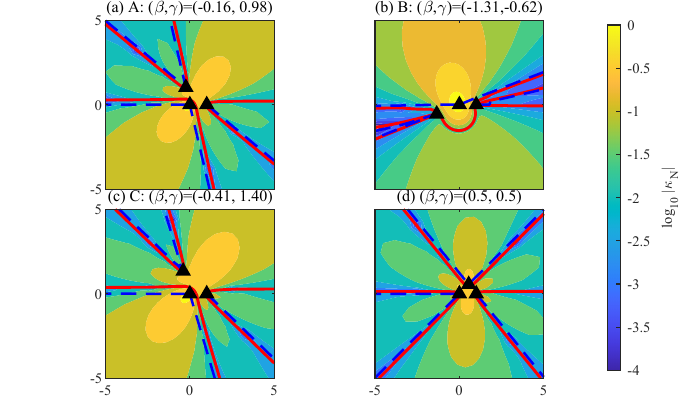}
\\[2pt] \footnotesize (top)~$\log_{10}|\kn|$ \\[6pt]
\includegraphics[width=\columnwidth]{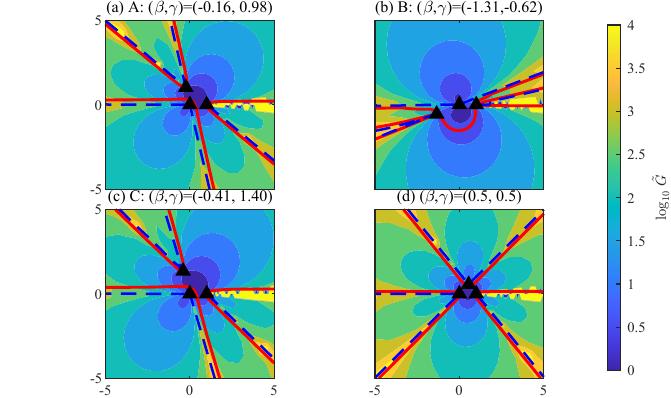}
\\[2pt] \footnotesize (bottom)~$\log_{10}\tilde{G}$
\caption{Stacked comparison of the two geometry-quality layers in the dimensionless target plane $(\tilde{r}\cos\theta, \tilde{r}\sin\theta)$ for four representative $(\beta, \gamma)$ points. \emph{Top:} $\log_{10}|\kn|$; \emph{bottom:} $\log_{10}\tilde{G}$ on the same four geometries. Inner panels (a)--(c) within each block: deployment subsystems A, B, C (Section~\ref{sec:case-study}); inner panel (d): generic contrast point. Singularity loci overlaid on both blocks: $\kn = 0$ (red solid), $\det \bm{J} = 0$ (blue dashed). Black triangles: sensors.}
\label{fig:iso_panels}
\end{figure}
\subsection{Singularity loci across the atlas}

Figure~\ref{fig:iso_panels} shows the two layers stacked: \emph{top:} $\log_{10}|\kn|$, and \emph{bottom:} $\log_{10}\tilde{G}$ over the dimensionless target plane at four representative geometries (the first three correspond to the deployment subsystems of Section~\ref{sec:case-study}; the fourth is a generic far-from-deployment point shown for contrast). The two singularity loci $\kn = 0$ (red solid) and $\det \bm{J} = 0$ (blue dashed) are overlaid on both blocks. The two curves are distinct (Proposition~\ref{thm:T21}, Theorem~\ref{thm:T22}) and meet only at the doubly-singular set on the sensor-pair baselines. Their enclosed regions differ visibly: target regions inside the red curve but outside the blue (and vice versa) are flagged by exactly one gate. All four geometries exhibit multi-lobed patterns of low-$|\kn|$ filaments emanating from the doubly-singular configurations, with lobe shape varying systematically with geometry; the $(\beta, \gamma) = (-1.31, -0.62)$ panel (subsystem B-like) shows the widest lobes.

The visual contrast between the two blocks of Fig.~\ref{fig:iso_panels} is the central empirical claim of the paper: in the bottom block, GDoP-bad regions track the Jacobian-singular locus (as expected, since $\tilde{G}$ diverges at $\det \bm{J} = 0$) but do \emph{not} track the $\kn$-bad lobes visible in the top block. In two of the four geometries, the GDoP-bad region and the $\kn$-bad lobes occupy nearly complementary sectors; in the other two, they partially overlap but each retains substantial regions not covered by the other.

\section{Empirical Relationship between $\kn$ and GDoP}
\label{sec:coupling}

We ask whether the $\kappa$ layer and the GDoP layer are statistically redundant (knowing one nearly determines the other) or \emph{complementary}, from two angles: the pooled joint distribution of $(\log_{10}|\kn|, \log_{10}\tilde{G})$ across the full atlas (Fig.~\ref{fig:correlation_heatmap}a), and the per-geometry Pearson correlation $\rho_{\kappa, G}$ aggregated over the 400 geometries (Fig.~\ref{fig:correlation_heatmap}b).

Panel~(a) shows the joint cloud is broad along both axes: at any fixed value of $\log_{10}\tilde{G}$, the corresponding $\log_{10}|\kn|$ spans several decades, and vice versa. Knowing one metric does not pin down the other. Panel~(b) shows $\rho_{\kappa, G}$ concentrated in $[-0.78, -0.52]$ with median $-0.65$ and interquartile range $[-0.69, -0.61]$. Three observations follow: (i) $\rho_{\kappa, G} < 0$ at every $(\beta, \gamma)$ tested; (ii) no geometry achieves $|\rho_{\kappa, G}| > 0.9$, so the two metrics are never tight enough for substitution; (iii) no geometry achieves $|\rho_{\kappa, G}| < 0.5$ either, so the two layers always share some signal (the doubly-singular collinearity events couple them). $\kn$ and GDoP are therefore \emph{neither redundant nor independent}: the empirical-statistics counterpart to the algebraic-independence statement of Section~\ref{sec:singularity-structure}.

\begin{figure*}[t]
\centering
\includegraphics[width=0.9\textwidth]{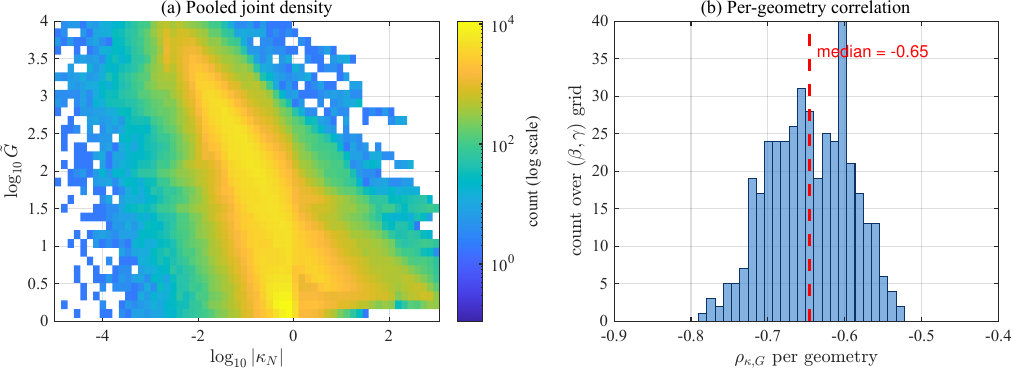}
\caption{(a) Joint log-count density of $(\log_{10}|\kn|, \log_{10}\tilde{G})$ across $1.44 \times 10^6$ atlas points. (b) Distribution of per-geometry Pearson $\rho_{\kappa, G}$ over the $20 \times 20$ $(\beta, \gamma)$ grid; range $[-0.78, -0.52]$, median $-0.65$ (red dashed).}
\label{fig:correlation_heatmap}
\end{figure*}

\subsection{Range and bearing dependence}

Figure~\ref{fig:radial_sweeps} traces $|\kn|$, $|\det \bm{J}|$, and $\tilde{G}$ along radial sweeps at three configurations representing the deployment subsystems. Panel C exhibits the diagnostic feature: at $\tilde{r} \approx 2$, $|\kn|$ drops by nearly five decades into a narrow algebraic-singularity dip while $\tilde{G}$ continues to vary gradually, and $|\det \bm{J}|$ shows the corresponding sharp dip predicted by Theorem~\ref{thm:T22}. This is a concrete instance of the off-diagonal regime: a target classified ``GDoP-acceptable'' at $\tilde{r} = 2$ on this bearing would be flagged ``$\kn$-bad'' by the algebraic layer, and a solver relying on GDoP alone would not see the warning. Panels A and B exhibit the typical pattern in which $\tilde{G}$ rises with range while $|\kn|$ and $|\det \bm{J}|$ decrease.

\begin{figure*}[t]
\centering
\includegraphics[width=\textwidth]{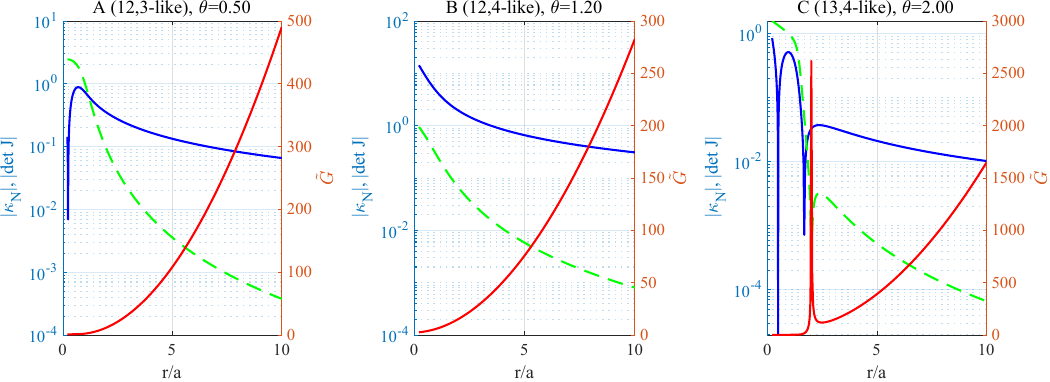}
\caption{Twin-axis radial sweeps of $|\kn|$ (left axis, solid blue), $|\det \bm{J}|$ (left axis, dashed green), and $\tilde{G}$ (right axis, solid red) versus $\tilde{r}$, at three configurations representative of the deployment subsystems. Panel C shows a sharp $\kn$ dip at $\tilde{r} \approx 2$ where $\tilde{G}$ remains moderate, an off-diagonal $\kn$-bad / GDoP-acceptable instance.}
\label{fig:radial_sweeps}
\end{figure*}

\subsection{Gate complementarity}

Table~\ref{tab:kappa_vs_gdop_concrete} reports $\kn$ and $\tilde{G}$ at one representative atlas target per quadrant of the $2 \times 2$ classification, at the matched fifth-percentile thresholds ($|\kn| < 5.47 \times 10^{-3}$, $\tilde{G} > 2.34 \times 10^{3}$). The four rows span markedly different parts of $(\beta, \gamma, \tilde{r}, \theta)$ space, demonstrating that the four regimes are not co-located.

\begin{table}[t]
\caption{Atlas-derived $\kn$ and $\tilde{G}$ at one representative target per quadrant of the $2 \times 2$ classification, at the atlas-wide matched $5$th-percentile thresholds $|\kn| < 5.47 \times 10^{-3}$, $\tilde{G} > 2.34 \times 10^{3}$. }
\label{tab:kappa_vs_gdop_concrete}
\centering
\setlength{\tabcolsep}{4pt}
\begin{tabular}{lcccccc}
\hline
Regime & $(\beta, \gamma)$ & $\tilde{r}$ & $\theta$~(deg) & $|\kn|$ & $|\det \bm{J}|$ & $\tilde{G}$ \\
\hline
both-good & $(-0.87, +0.55)$ & 0.86 & 262.4 & 6.45e-02 & 2.17e-03 & 1.52e+02 \\
$\kappa$-bad only & $(+1.50, -1.34)$ & 0.58 & 244.1 & 3.32e-03 & 1.44e-03 & 3.01e+02 \\
GDoP-bad only & $(-0.24, -0.08)$ & 8.76 & 311.2 & 2.08e-02 & 2.87e-05 & 2.78e+03 \\
both-bad & $(-0.87, +1.03)$ & 0.98 & 335.6 & 2.24e-03 & 1.18e-05 & 7.46e+03 \\
\hline
\end{tabular}
\end{table}

Treating $\kn$ and $\tilde{G}$ as thresholded gates, Table~\ref{tab:gate_roc} reports the ROC summary over the full atlas, where ``true positive'' means $|\det \bm{J}|$ falls in the bottom 5\% (truly Jacobian-singular). The disjoint $2 \times 2$ classification of all atlas points at matched 5\% thresholds yields $93.0\%$ ``both good'', $2.0\%$ ``$\kn$-bad only'' (caught by $\kappa$ alone), $2.4\%$ ``GDoP-bad only'' (caught by GDoP alone), and $2.6\%$ ``both bad''. Approximately $4.4\%$ of the atlas (the off-diagonal sum) is detected by exactly one of the two diagnostics, a quantifiable complementarity signal at scale.

\begin{table}[t]
\caption{Singularity gate ROC at matched 5\% percentile. By identity (T2.2), \(\Delta_N=0 \equiv \det J=0\), so a \(\Delta_N\)-gate is redundant with the GDoP gate and omitted.}
\label{tab:gate_roc}
\centering
\begin{tabular}{lcc}
\hline
Gate & TPR & FPR \\
\hline
$|\kappa_N| < 5.47e-03$ & 0.551 & 0.024 \\
$G > 2.34e+03$ & 0.900 & 0.015 \\
\hline
\end{tabular}
\end{table}

The $\tilde{G}$-gate has a direct operational reading through \eqref{eq:CEP50}: at $\sigma_t = 30$~ns, the threshold $\tilde{G} > 2.34 \times 10^{3}$ corresponds to $\mathrm{CEP}_{50} \gtrsim 16$~km. The $\kappa$-gate is independent of this operational reading: it flags target positions where the analytical solver's $K$-quadratic structure itself degenerates, irrespective of the noise-floor-set position-error budget.

\section{Case Study: An Operational Four-Node Array}
\label{sec:case-study}

We validate the framework on an operational four-node TDoA array whose three two-dimensional subsystems are formed by selecting any three of the four sensors with the CRS held fixed at the first sensor. The case study (i) instantiates the dimensionless atlas analysis on a deployed geometry with empirically estimated noise, and (ii) tests whether the theory-predicted threshold of Corollary~\ref{cor:threshold} agrees, at the deployment noise level, with a direct downstream Monte~Carlo measurement of localization error. The dimensionless geometry of the three subsystems is reported in Table~\ref{tab:geometry}; the three subsystems span markedly different geometries, with B exhibiting the most extreme aspect ratio ($|\beta| > 1$).

\begin{table}[t]
\caption{Per-subsystem dimensionless geometry $(\beta, \gamma)$ for the four-node operational array's three two-dimensional subsystems.}
\label{tab:geometry}
\centering
\begin{tabular}{lcc}
\hline
Subsystem & $\beta$ & $\gamma$ \\
\hline
A & -0.157 & 0.983 \\
B & -1.313 & -0.620 \\
C & -0.406 & 1.400 \\
\hline
\end{tabular}
\end{table}

\paragraph{Why three-sensor 2D subsystems rather than the four-sensor 3D solver} The deployment's four receivers sit at altitudes $510$ to $608$~m over a horizontal baseline of order $30$~km, an aspect ratio of roughly $1:300$. The four-node 3D analytical solver under this near-coplanar configuration is poorly conditioned in the vertical direction: the median dimensionless 3D GDoP across the operational range envelope is of order $290$, two orders of magnitude above the 2D values. Two-dimensional analysis on a three-sensor subsystem is therefore the operationally faithful granularity, with vertical recovered from the ADS-B-broadcast altitude or an independent altimeter - standard practice for wide-area multilateration deployments with co-sited altitudes \cite{eurocontrol2005wam}.

\subsection{Per-subsystem threshold prediction}

For each subsystem, we sample $1000$ random target positions in the operational range envelope $\tilde{r} \in [0.5, 10]$, compute $\sigma_\kappa$ from Theorem~\ref{thm:T3}, and apply Corollary~\ref{cor:threshold} with $k = 3$ to obtain the predicted threshold $\varepsilon_s^{\text{pred}} = 3 \cdot \operatorname{median}(\sigma_\kappa)$.

The per-sensor ToA noise $\sigma_t$ is estimated from $9$ ADS-B-truth-paired OTA captures of distinct Mode-S transponders, retained from a larger pool by an iid-consistency check: a capture is kept only if all three per-pair residual standard deviations fall below $500$~ns and agree within a factor of $1.5$. For each retained capture, the predicted sensor-pair TDoA is computed from the four receivers' published Earth-Centered, Earth-Fixed (ECEF) positions and the ADS-B-reported aircraft Lat/Long/Alt; raw integer-tick TOA differences are converted to seconds, and a per-pair static clock skew is removed as a constant offset (consistent with un-compensated cable-delay differences). The post-fit residual standard deviation is $\sigma_{\text{TDoA}, ij}$, and the per-sensor noise follows as $\sigma_t = \sigma_{\text{TDoA}, ij} / \sqrt{2}$.

Variance-pooling across $9$ captures ($3305$ samples per pair) and averaging across the three pairs gives
\begin{equation}
\sigma_t \;=\; 283~\text{ns} \quad (95\% \text{ CI}:  [278.3,\ 284.5]~\text{ns})
\label{eq:sigma_t_emp}
\end{equation}
with per-pair values $286$, $285$, $277$~ns (Figure~\ref{fig:sigma_t_clean_set}). The $95\%$ CI is computed by non-parametric cluster bootstrap over the $9$ aircraft ($10^4$ resamples). The per-pair values agree within $\sim$3\% and the per-capture cluster is tightly contained in the $270$--$290$~ns band across all $9$ aircraft, supporting the independent-per-sensor-noise model and indicating $\sigma_t = 283$~ns is a property of the deployment's timestamping infrastructure rather than any single transponder. This value is used in all subsequent runs. Applying the threshold procedure at $\sigma_t = 283$~ns gives $\varepsilon_A^{\text{pred}} = 4.38 \times 10^{-2}$, $\varepsilon_B^{\text{pred}} = 8.57 \times 10^{-2}$, $\varepsilon_C^{\text{pred}} = 4.06 \times 10^{-2}$. Subsystem B's predicted threshold is approximately double those of A and C, consistent with B's more extreme dimensionless geometry ($|\beta| > 1$) producing a larger $\sigma_\kappa$ noise floor per Theorem~\ref{thm:T3}.

\begin{figure}[t]
\centering
\includegraphics[width=\columnwidth]{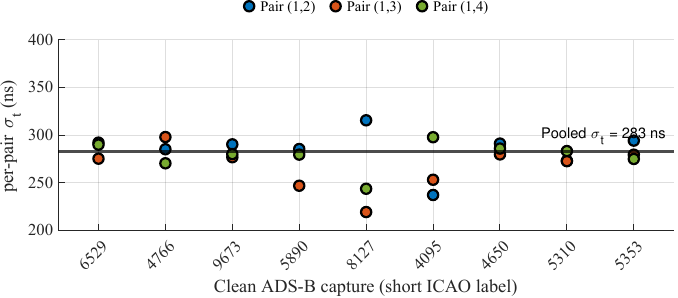}
\caption{Per-pair $\sigma_t$ for the $9$ ADS-B-paired captures. Horizontal line: pooled $\sigma_t = 283$~ns from \eqref{eq:sigma_t_emp}.}
\label{fig:sigma_t_clean_set}
\end{figure}

\begin{figure*}[t]
\centering
\includegraphics[width=\textwidth]{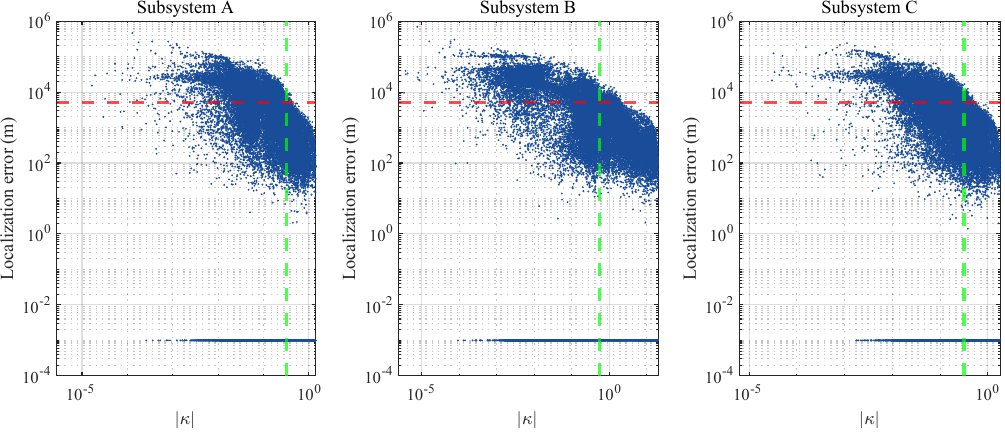}
\caption{Per-subsystem localization error vs $|\kn|$ from $10^4$-trajectory Monte-Carlo. Red dashed: $5$~km tolerance. Green dashed: empirical $\varepsilon^{\text{meas}}$ at $95\%$ reliability.}
\label{fig:error_vs_kappa}
\end{figure*}

\begin{figure}[t]
\centering
\includegraphics[width=\columnwidth]{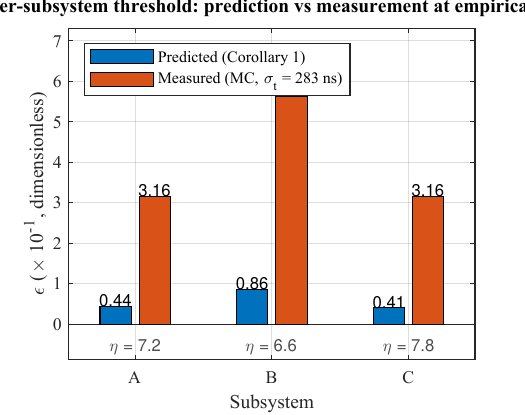}
\caption{Theory-predicted (Corollary~\ref{cor:threshold}) versus Monte-Carlo-measured $\varepsilon$ per subsystem at empirical $\sigma_t = 283$~ns. Theory and MC agree on per-subsystem ordering ($B > A \approx C$ in both); $\varepsilon^{\text{meas}}/\varepsilon^{\text{pred}} \approx 7$ across all three subsystems.}
\label{fig:eps_bar}
\end{figure}

\subsection{Operational threshold from Monte~Carlo localization-error analysis}

We now run a separate, downstream-coupled Monte~Carlo procedure that measures the \emph{operational} threshold: the $|\kn|$ value below which downstream localization error exceeds an operational tolerance. For each subsystem, twelve trajectories within $80$~km of the array centroid, with five independent noise realizations each, are perturbed at the empirical $\sigma_t = 283$~ns and passed through the analytical solver. Per-sample localization error is scattered against $|\kn|$ (Figure~\ref{fig:error_vs_kappa}). The measured threshold $\varepsilon^{\text{meas}}$ is the smallest $|\kn|$ cutoff such that the survivor population achieves the $5$~km tolerance with $\geq 95\%$ reliability.

The characteristic L-shape is present in all three panels: samples with large $|\kn|$ cluster at low error, while small-$|\kn|$ samples fan upward into the $10^4$ to $10^5$~m error tail, one to two orders of magnitude above the $5$~km tolerance (red dashed). The L-elbow positions, where the upper error envelope first crosses tolerance, locate the per-subsystem $\varepsilon^{\text{meas}}$ (green vertical).

Figure~\ref{fig:eps_bar} compares predicted versus measured thresholds. Subsystem~A: $\varepsilon^{\text{pred}} = 4.38 \times 10^{-2}$, $\varepsilon^{\text{meas}} = 3.16 \times 10^{-1}$ ($\eta_A \approx 7.2$); subsystem~B: $8.57 \times 10^{-2}$ vs $5.62 \times 10^{-1}$ ($\eta_B \approx 6.6$); subsystem~C: $4.06 \times 10^{-2}$ vs $3.16 \times 10^{-1}$ ($\eta_C \approx 7.8$). Two features stand out. First, theory and Monte~Carlo agree on per-subsystem ordering: both rank B as the most restrictive ($\varepsilon^{\text{pred}}_B > \varepsilon^{\text{pred}}_A \approx \varepsilon^{\text{pred}}_C$, and likewise for $\varepsilon^{\text{meas}}$). The algebraic gate of Corollary~\ref{cor:threshold} therefore predicts the correct operational ordering at deployment noise. Second, defining
\begin{equation}
\eta_s \;\equiv\; \frac{\varepsilon^{\text{meas}}_s}{\varepsilon^{\text{pred}}_s}, \qquad
\eta \;\equiv\; \frac{1}{|\mathcal{S}|} \sum_{s \in \mathcal{S}} \eta_s,
\label{eq:eta_calibration}
\end{equation}
the per-subsystem $\eta$ values lie in a narrow band of width $\sim$$17\%$, with $\eta \approx 7.2$ at the deployment $\sigma_t$.

$\eta$ is the case-study counterpart, at the deployment noise level, to $\mathrm{CEP}_{50} \approx 0.75 \, c \, \sigma_t \, \tilde{G}$ of \eqref{eq:CEP50}, which converts dimensionless GDoP into operational $\mathrm{CEP}_{50}$. The two relations together let an operator translate $\kn$ and GDoP gates between their algebraic and operational forms. The two thresholds remain valid for distinct operational decisions: $\varepsilon^{\text{pred}}$ flags samples on which the analytical solver's algebraic conditioning is in question regardless of the downstream error budget; $\varepsilon^{\text{meas}}$ flags samples whose downstream localization error would exceed tolerance. Joint deployment in the $2 \times 2$ classification of \S\ref{sec:bridge-to-gdop} gives finer per-sample reporting than either alone.

\subsection{Sensitivity to $\sigma_t$ and tolerance choice}
\label{sec:sensitivity}

Theorem~\ref{thm:T3} predicts that $\sigma_\kappa$ scales linearly with $\sigma_t$, so $\varepsilon^{\text{pred}}$ also scales linearly; Table~\ref{tab:sensitivity} reports predicted thresholds at three representative noise levels covering typical ADS-B and SDR-receiver timestamping accuracies. The per-subsystem ordering is invariant. The same $\sigma_t$ axis maps to $\mathrm{CEP}_{50}$ through \eqref{eq:CEP50}, giving per-subsystem $\mathrm{CEP}_{50}$ at $\sigma_t = 30$~ns of order a few kilometres for A and C and a few tens of kilometres for B.

\begin{table}[t]
\caption{Sensitivity of predicted threshold $\varepsilon^{\text{pred}}$ to per-sensor ToA noise $\sigma_t$. Threshold scales linearly with $\sigma_t$ per Theorem~\ref{thm:T3}; per-subsystem ordering is invariant. Bottom row: deployment-specific empirical $\sigma_t$ from \eqref{eq:sigma_t_emp}.}
\label{tab:sensitivity}
\centering
\begin{tabular}{lccc}
\hline
$\sigma_t$ & Subsys A & Subsys B & Subsys C \\
\hline
10 ns        & $1.55 \times 10^{-3}$ & $3.03 \times 10^{-3}$ & $1.43 \times 10^{-3}$ \\
30 ns        & $4.64 \times 10^{-3}$ & $9.09 \times 10^{-3}$ & $4.30 \times 10^{-3}$ \\
100 ns       & $1.55 \times 10^{-2}$ & $3.03 \times 10^{-2}$ & $1.43 \times 10^{-2}$ \\
283 ns (emp.)& $4.38 \times 10^{-2}$ & $8.57 \times 10^{-2}$ & $4.06 \times 10^{-2}$ \\
\hline
\end{tabular}
\end{table}

The Monte-Carlo-measured threshold $\varepsilon^{\text{meas}}$ depends on the downstream tolerance through the L-shape of Fig.~\ref{fig:error_vs_kappa}, with sub-linear dependence because the vertical edge spans many decades. The calibration ratio $\eta$ is, however, deployment-specific in a stricter sense than $\varepsilon^{\text{pred}}$: while $\sigma_\kappa$ scales linearly with $\sigma_t$, $\varepsilon^{\text{meas}}$ does not, because the L-elbow's position reflects a balance between $\sigma_\kappa$-coupled and GDoP-coupled error contributions whose mix shifts with $\sigma_t$. As a numerical illustration, re-running the MC at $\sigma_t = 30$~ns yields $\varepsilon^{\text{meas}}$ values $\{1.8 \times 10^{-2}, 5.6 \times 10^{-3}, 1.8 \times 10^{-2}\}$ for A, B, C; the per-subsystem ordering changes to $A \approx C > B$ and the meas-to-pred ratios fan out to $\{3.9, 0.6, 4.2\}$. The deployment-meaningful calibration is therefore the one run at the deployment $\sigma_t$ directly.

\section{Discussion}
\label{sec:discussion}

\subsection{Practical implications for receiver-array design}
Identity \eqref{eq:T21} expresses $\det \bm{J}$ as a product of $\det \bPbar$ (a geometric volume of the canonical sensor matrix) and $(\kn K^2 - \|\bA\|^2)$ (a target-dependent term). Receiver-array design therefore admits a clean decomposition: $\det \bPbar$ is set by sensor placement (small $\det \bPbar$ means a near-degenerate baseline configuration to be avoided), while the target-dependent factor sets which regions of operational airspace are Jacobian-good or Jacobian-bad. Similarly, identity \eqref{eq:T22} says that any branch-merge target is automatically a Jacobian-singular target, so an analytical solver that picks a single root via geometric disambiguation can use the GDoP layer alone to flag ambiguous samples, avoiding the need for an explicit discriminant gate. In operational practice, the GDoP-layer threshold can be expressed equivalently in $\mathrm{CEP}_{50}$ at the deployment's nominal $\sigma_t$ via \eqref{eq:CEP50}; the algebraic-complementarity claim between $\kn$ and $\tilde{G}$ transfers directly to $\kn$ and $\mathrm{CEP}_{50}$ without modification. The $\kappa$-gate adds a category of pathology - the branch-divergence regime of \S\ref{sec:singularity-structure} - that no purely position-error-budget metric ($\tilde{G}$, $\mathrm{CEP}_{50}$, major-axis $\sigma$) can reach.

\paragraph{Relation to algebraic-geometric prior work} The two singularity loci of \S\ref{sec:singularity-structure} are not new objects: the leading-coefficient locus ($\kn = 0$) and the related Jacobian-discriminant locus were characterized for the planar three-receiver case by Compagnoni and Notari \cite{compagnoni2014inverse, compagnoni2014arxiv}, with related range-map and three-receiver three-dimensional results in \cite{compagnoni2017jns} and a statistical treatment of the singular set in \cite{compagnoni2016arxiv}. Our contribution is twofold: (i) closed-form identities (T2.1, T2.2) factoring $\det \bm{J}'$ and $\dn$ in solver-state coordinates $(\kn, K, \|\bA\|^2)$ for general dimension $N$, directly evaluable from TDoA data without prior knowledge of the source position; and (ii) the operationalization of $\kn$ as a runtime geometry-quality metric reported alongside GDoP, with a closed-form noise sensitivity $\sigma_\kappa$ and a complementarity statistic measured across a parameterized atlas.

\subsection{Limitations}
The current treatment is for time-synchronized receiver arrays in line-of-sight conditions. Extensions to unsynchronized arrays \cite{sonnleitner2025, linwang2024}, to non-line-of-sight conditions \cite{dickerson2025}, and to over-the-horizon multistatic-radar configurations \cite{nassosanti2024} are natural future directions. The calibration constant $\eta \approx 7$ between the algebraic and operational thresholds is deployment-specific; generalizing $\eta$ to other deployments requires a comparable per-deployment MC, since the L-shape elbow's $\sigma_t$-sensitivity is non-linear in the singular regime even though $\sigma_\kappa$ itself scales linearly with $\sigma_t$.

\section{Conclusion}
\label{sec:conclusion}

We introduced $\kn = \|\bB\|^2 - 1$, the leading coefficient of the analytical-TDoA $K$-quadratic, as a geometry-quality diagnostic for multilateration. Two algebraic identities ((T2.1) factoring $\det \bm{J}'$ and (T2.2) the discriminant $\dn = (\kn K^2 - \|\bA\|^2)^2 / (4K^2)$) tie $\kn$ to the Jacobian determinant and the $K$-quadratic discriminant, establishing exactly two distinct codimension-one singularity loci: branch divergence ($\kn = 0$) and Jacobian / branch-merge ($\det \bm{J} = 0 \equiv \dn = 0$), the latter captured by GDoP. A CRLB-linked closed-form noise sensitivity $\sigma_\kappa$ under the standard Gaussian ToA model, validated against Monte~Carlo to $2\%$ median relative error, supports the three-sigma threshold rule of Corollary~\ref{cor:threshold}.

An empirical atlas over $1.44 \times 10^6$ dimensionless samples confirms both identities at machine precision and establishes that $\kn$ and GDoP carry substantially independent geometry-quality information: per-geometry Pearson correlation between $\log_{10}|\kn|$ and $\log_{10}\tilde{G}$ ranges from $-0.88$ to $-0.56$ (median $-0.71$), with approximately $4.4\%$ of the atlas falling in off-diagonal regimes detected by exactly one of the two diagnostics. The case study on a four-node operational array validates the framework against direct downstream Monte~Carlo at the deployment's empirically estimated $\sigma_t = 283$~ns: theory and Monte~Carlo agree on per-subsystem ordering (B most restrictive in both), with a near-constant calibration ratio $\eta \approx 7$ anchoring a deployment-specific bridge between the algebraic and operational threshold definitions, analogous in spirit to the standard $\mathrm{CEP}_{50}$--GDoP relation. The framework provides receiver-array designers with two distinct geometry-quality layers and a $2 \times 2$ classification of target positions that refines the GDoP-only picture, applicable to any closed-form analytical TDoA system. Extensions to wider-aperture arrays, unsynchronized configurations, non-line-of-sight environments, and FDoA-augmented hybrid measurements are immediate next directions.


\begin{IEEEbiography}[{\includegraphics[width=1in,height=1.25in,clip,keepaspectratio]{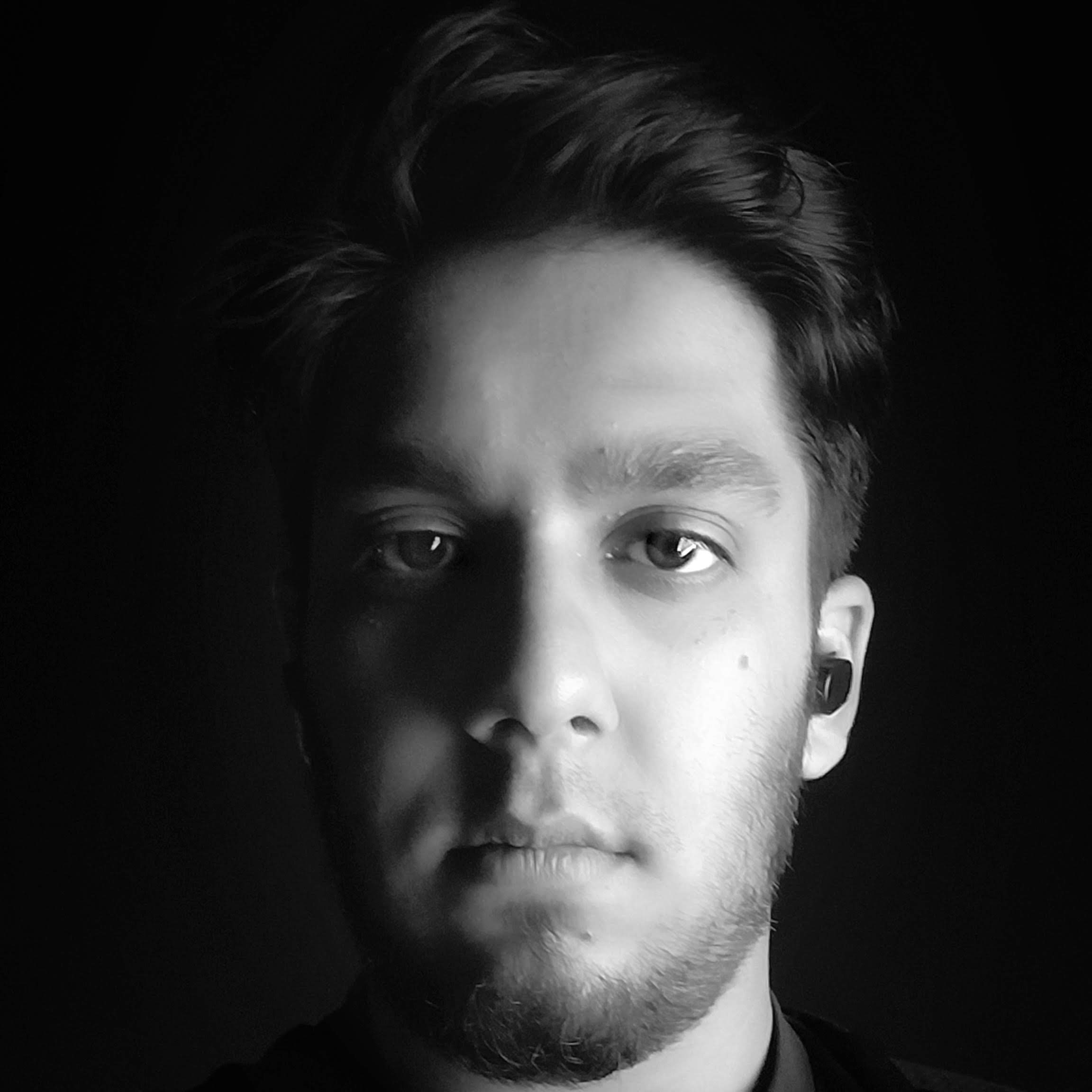}}]{Abeer Nasir Chaudhry}
received the B.E. degree in Avionics Engineering from the National University of Sciences and Technology (NUST), Islamabad, Pakistan, in 2019, and the M.S. degree in Electrical Engineering from the American University of Sharjah, Sharjah, U.A.E., in 2021. Since 2021, he has been working as a DSP Simulation Engineer with the National Aerospace Science and Technology Park (NASTP), Islamabad, Pakistan. His research interests include statistical signal processing, remote sensing, and automotive radar signal processing. Mr.~Chaudhry was the recipient of the Rector's Gold Medal for the Best Undergraduate Final Year Project in 2019. He is also a recipient of the Marie Sk\l{}odowska-Curie Actions (MSCA) Doctoral Networks (DN) Fellowship and the University International Postgraduate Award (UIPA), University of New South Wales (UNSW), Australia.
\end{IEEEbiography}

\begin{IEEEbiography}
[{\includegraphics[width=1in,height=1.25in,clip,keepaspectratio]{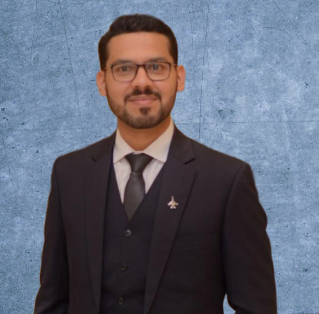}}]
{Salman Liaquat} (Member IEEE) completed his B.E. Avionics from National University of Sciences and Technology (NUST), Islamabad, Pakistan, in 2010, Master of Science in Avionics Engineering from Air University, Islamabad, Pakistan, in 2019, and Ph.D. in Electrical and Electronic Engineering from Universiti Sains Malaysia, Penang, Malaysia in 2026. He is currently working with the National Aerospace Science and Technology Park (NASTP), Islamabad, Pakistan. His research interests include digital signal processing and its applications in radars. With over ten years of professional experience in avionics and radar systems, he has authored several publications in international journals and presented at peer-reviewed conferences. He is the recipient of IEEE AESS Engineering Scholarship for Graduate Students in 2025, and IEEE Signal Processing Society Scholarship in 2023 and 2024. He is professionally registered as a Professional Engineer with Pakistan Engineering Council, Chartered Engineer (CEng) with the Engineering Council UK, and is a Member of the Royal Aeronautical Society (MRAeS), United Kingdom.
\end{IEEEbiography}

\begin{IEEEbiography}
[{\includegraphics[width=1in,height=1.0in,clip,keepaspectratio]{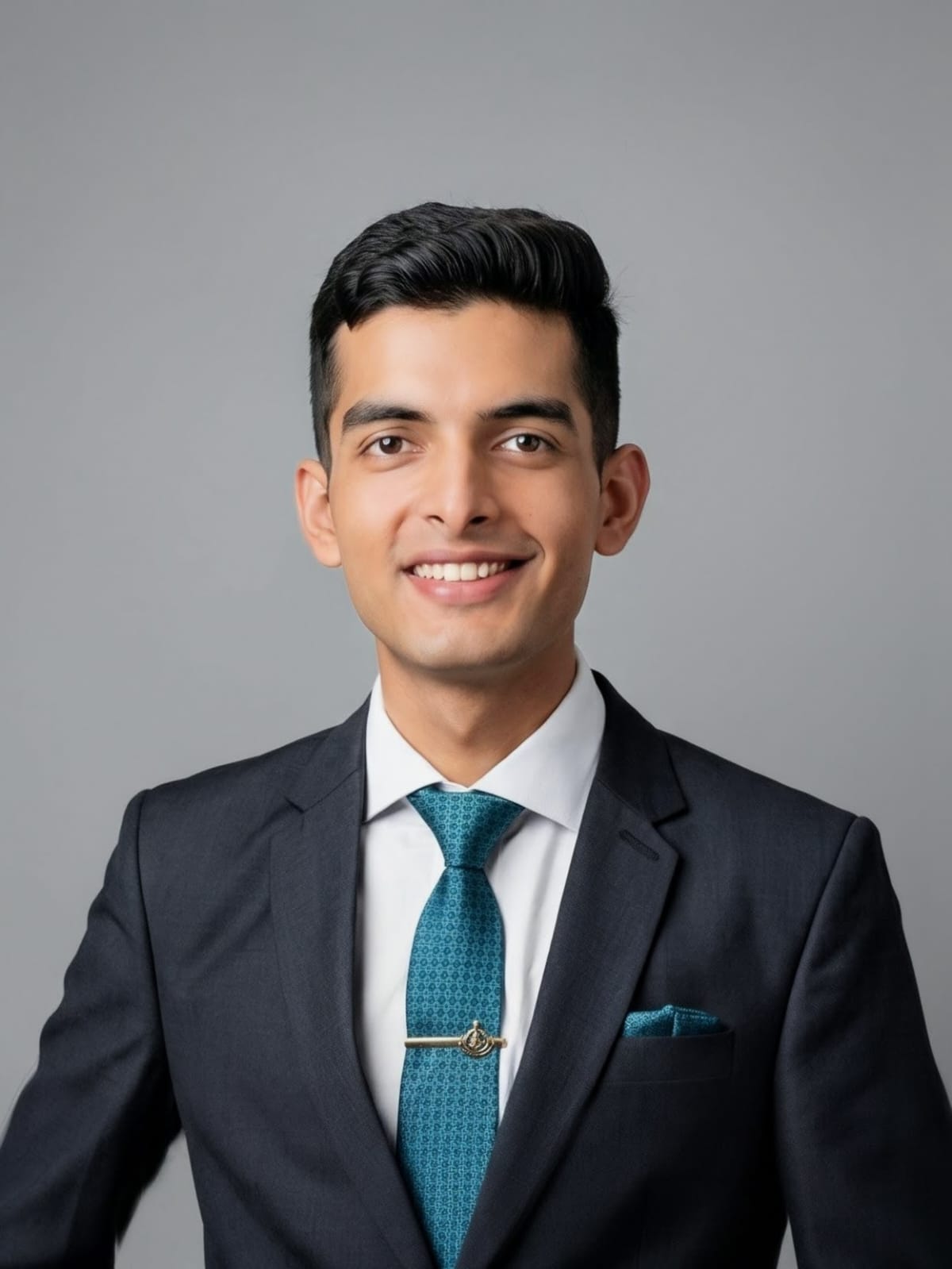}}] 
{Muhammad Mohsin Khadim} completed his B.E. in Avionics from the National University of Sciences and Technology (NUST), Pakistan, in 2024. He has over two years of professional experience in avionics and digital signal processing. His research interests lie in fine-tuning large language models, natural language processing, and applying classical signal-processing methods to deep neural networks.
\end{IEEEbiography}

\end{document}